\documentclass{article}

\usepackage{microtype}
\usepackage{graphicx}
\usepackage{subcaption}
\usepackage{booktabs} % for professional tables
\usepackage{multirow} 
\usepackage{braket}
\usepackage{amsmath}      
\usepackage{amssymb}    
\usepackage[table]{xcolor} % Required for row coloring
\usepackage{fontawesome5}
\usepackage{mathtools}
\usepackage{amsthm}
\usepackage{algorithm}

\usepackage{hyperref}

\usepackage[accepted]{icml2026}

\usepackage{xcolor}
\usepackage{tcolorbox}

\usepackage[capitalize,noabbrev]{cleveref}

\theoremstyle{plain}
\newtheorem{theorem}{Theorem}[section]
\newtheorem{proposition}[theorem]{Proposition}
\newtheorem{lemma}[theorem]{Lemma}

\theoremstyle{definition}

\theoremstyle{remark}

\usepackage[textsize=tiny]{todonotes}

\icmltitlerunning{Ultrafast On-Chip Online Learning via KAN Spline Locality}

\begin{document}

\twocolumn[
\title{}

\icmltitle{Ultrafast On-Chip Online Learning via \\Spline Locality in Kolmogorov–Arnold Networks}

  % It is OKAY to include author information, even for blind submissions: the
  % style file will automatically remove it for you unless you've provided
  % the [accepted] option to the icml2026 package.

  % List of affiliations: The first argument should be a (short) identifier you
  % will use later to specify author affiliations, Academic affiliations
  % should list Department, University, City, Region, Country, Industry
  % affiliations should list Company, City, Region, Country

  % You can specify symbols; they are numbered in order. Ideally, you
  % should not use this facility. Affiliations will be numbered in order of
  % appearance, and this is the preferred way.
  \icmlsetsymbol{equal}{*}

  \begin{icmlauthorlist}
    \icmlauthor{Duc Hoang}{equal,MIT}
    \icmlauthor{Aarush Gupta}{equal,MIT}
    \icmlauthor{Philip Harris}{MIT}
  \end{icmlauthorlist}

  \icmlaffiliation{MIT}{MIT}

  \icmlcorrespondingauthor{Duc Hoang}{dhoang@mit.edu}

  % You may provide any keywords that you find helpful for describing your
  % paper; these are used to populate the "keywords" metadata in the PDF, but
  % will not be shown in the document
  \icmlkeywords{Ultrafast Online Learning, Hardware-Aware Machine Learning, Low-Precision Training, Sparse Local Updates, Kolmogorov–Arnold Networks}

  \vskip 0.3in
]

% this must go after the closing bracket ] following \twocolumn[ ...

% This command actually creates the footnote in the first column listing the
% affiliations and the copyright notice. The command takes one argument, which
% is text to display at the start of the footnote. The \icmlEqualContribution
% command is standard text for equal contribution. Remove it (just {}) if you
% do not need this facility.

% Use ONE of the following lines. DO NOT remove the command.
% If you have no special notice, KEEP empty braces:
%\printAffiliationsAndNotice{\icmlEqualContribution} % no special notice (required even if empty)
% Or, if applicable, use the standard equal contribution text:
\printAffiliationsAndNotice{\icmlEqualContribution}

\begin{abstract}
Ultrafast online learning is essential for high-frequency systems, such as controls for quantum computing and nuclear fusion, where adaptation must occur on sub-microsecond timescales.
Meeting these requirements demands low-latency, fixed-precision computation under strict memory constraints, a regime in which conventional Multi-Layer Perceptrons (MLPs) are both inefficient and numerically unstable.
We identify key properties of Kolmogorov-Arnold Networks (KANs) that align with these constraints.
Specifically, we show that: (i) KAN updates exploiting B-spline locality are sparse, enabling superior on-chip resource scaling, and (ii) KANs are inherently robust to fixed-point quantization.
By implementing fixed-point online training on Field-Programmable Gate Arrays (FPGAs), a representative platform for on-chip computation, we demonstrate that KAN-based online learners are significantly more efficient and expressive than MLPs across a range of low-latency and resource-constrained tasks.
To our knowledge, this work is the first to demonstrate model-free online learning at sub-microsecond latencies.
\end{abstract}

\begin{figure}[htp]
    \centering
    \includegraphics[width=0.49\linewidth, trim = 5 20 10 10, clip]{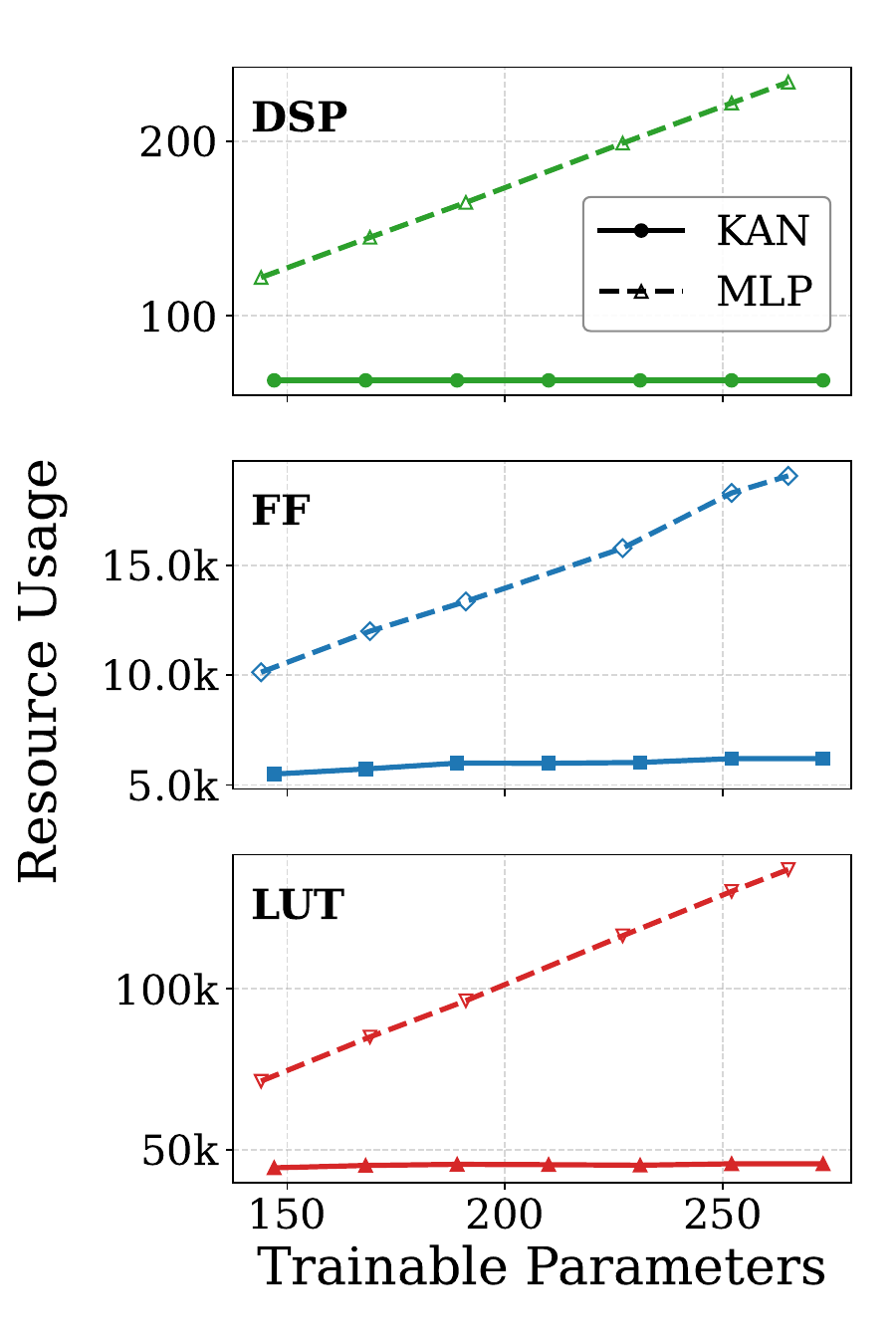}
    \includegraphics[width=0.49\linewidth, trim = 5 20 10 10, clip]{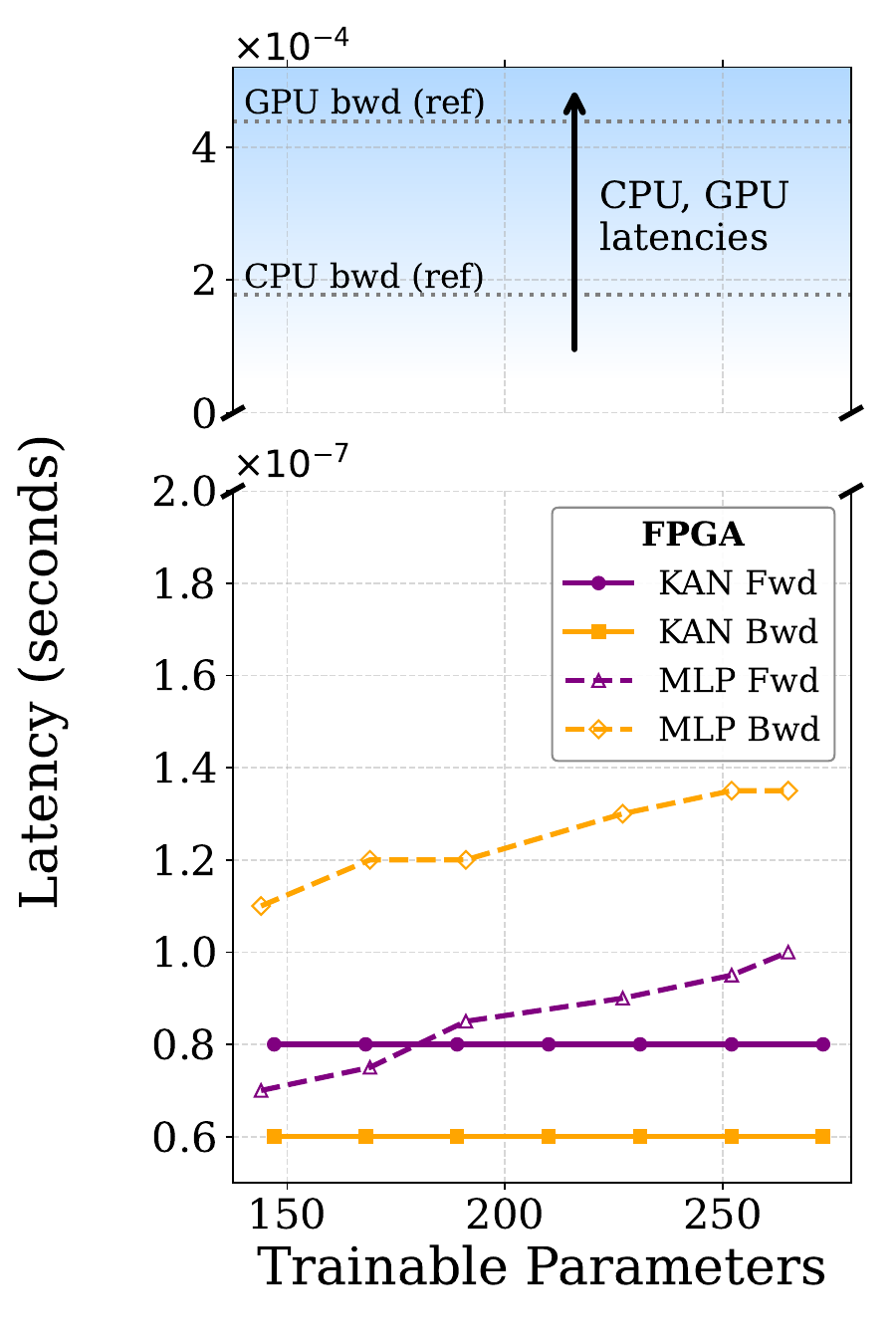}
    \caption{\textbf{Hardware scaling under ultrafast on-chip online learning.}
    On a non-stationary qubit readout task with fixed-point training, MLPs grow roughly linearly in on-chip resources (DSP/FF/LUT) and in forward/backward latency with parameter count.
    KANs leverage B-spline locality and favorable approximation scaling, increasing capacity (via grid size $G$) with near-constant resources and sub-100~ns latency. CPU/GPU (A100) lines show reference latencies for PyTorch implementations.
}
    
\label{fig:exp_2_G_scan}
\end{figure}

\section{Introduction}

Ultrafast \textbf{model-free online learning}, where the network learns an input-output mapping directly from streaming data without assuming an explicit system model, is essential in domains with high-frequency, non-stationary dynamics. These include quantum control, high-speed communications, and plasma diagnostics, where adaptation must occur on sub-microsecond timescales~\cite{real_time_axis_control}.
In these regimes, host–accelerator training loops are too slow: by the time gradients are computed off-chip and parameters return over high-latency links (e.g. PCIe), the system may have already shifted operating conditions.
Real-time adaptation, therefore, \emph{requires} both inference and training to be performed directly on-chip, with low-latency, fixed-precision computation under strict memory constraints. 

Currently, end-to-end on-chip training with standard MLPs remains impractical.
Backpropagation is expensive, and reduced precision can destabilize gradient-based optimization~\cite{ML_FPGA_Survey}.
As a result, prior systems focused on either hyperspecialized training workflows~\cite{EF_TRAIN} or a decoupling of learning from inference, even when continuous adaptation is required~\cite{RL_GoogleQuantumAI}.
This motivates a fundamental architectural question: \emph{what model properties make fixed-point, low-latency, on-chip online updates both stable and hardware-efficient?}

We revisit \textbf{Kolmogorov–Arnold Networks (KANs)~\cite{KAN} from a hardware-centric perspective}, where their perceived inefficiency~\cite{Tran_KAN_bad, KAN_BAD_2} largely reflects assumptions about batched, floating-point training and inference on GPUs.
In contrast, custom hardware can exploit KANs’ locally supported activations and sparse gradient updates, mitigating the global error propagation and dense compute overhead of MLPs.

Guided by theory on per-sample update cost and approximation error, we realize an efficient and ultrafast implementation of KAN on custom hardware (FPGA). Under equal parameter budgets and identical precision constraints, KANs achieve superior numerical stability, lower update cost, and ultrafast latency.
This enables reliable gradient updates on streaming data, supporting model-free online learning at sub-microsecond timescales for the first time.

\section{Related Works}
\label{sec:related_works}

\textbf{Application Domains.}
Ultra low-latency online adaptation is critical in domains where system dynamics evolve on microsecond or nanosecond timescales, including plasma control~\cite{hls4ml_fusion}, optical and wireless communications~\cite{adaptive_optics}, quantum control~\cite{QuantumRL_Nature, EtoE_Qubit_Readout, FPGA_Surface_Code}, and particle accelerator tuning~\cite{adaptive_ML_accelerator}.
Most existing approaches rely on meta-learning~\cite{meta_online_learning} or heterogeneous host–device pipelines, in which learning occurs off-device and models are periodically redeployed~\cite{RL_GoogleQuantumAI}.
Such workflows incur communication and scheduling latency and cannot support deterministic, sub-microsecond adaptation.
As a result, while custom hardware is widely used for real-time control~\cite{real_time_axis_control} and inference~\cite{EtoE_Qubit_Readout}, learning is typically absent or limited to primitive update rules~\cite{mid_circuit_FPGA}.

\textbf{On-Device Training.}
On-device neural network inference is well established across custom and embedded hardware platforms (e.g. \textsc{hls4ml}~\cite{hls4ml}, FINN~\cite{FINN}), but fully on-device training remains largely unexplored.
Prior work focuses on accelerating batch training~\cite{Sparse_Training, FCCM_Training_Batch} or targets narrow, task-specific scenarios~\cite{on_chip_equalizer, online_backprop_FPGA}, rather than continuous online adaptation.
Backpropagation fundamentally limits this regime by increasing arithmetic workload, requiring dense activation storage, and amplifying numerical instability under reduced precision~\cite{EF_TRAIN, ML_FPGA_Survey}.
Existing architectural and compiler-level optimizations mitigate only specific bottlenecks~\cite{Low_precision_tensor, CNN_auto_compiler, SGD_FPGA}.
Consequently, despite the dominance of MLPs driven by the ``hardware lottery"~\cite{hardware_lottery}, current approaches remain unsuitable for stable, deterministic on-chip online learning under tight latency and precision constraints.

\textbf{KANs and Perceived Hardware Limitations.}
KANs~\cite{KAN} use learnable spline activations and perform well empirically, yet are often labeled hardware-impractical due to recursive spline evaluation and increased per-edge parameterization~\cite{Tran_KAN_bad, KAN_BAD_2}.
However, these prevailing critiques of KAN hardware inefficiency are GPU-centric or inference-focused and overlook the locality and sparsity induced by B-spline activations.
Prior hardware work targets analog hardware~\cite{KAN_Analog} or LUT-based inference~\cite{KANELE, KAN_BSPLINE_LUT}, but does not address training.
To our knowledge, fully on-chip KAN training with deterministic execution, fixed-point arithmetic, and bounded local memory has not been analyzed or demonstrated, leaving the question of whether KANs can support hardware-efficient on-device learning entirely open.

\textbf{Other Adaptive Methods.}
Online adaptation includes statistical estimators and specialized learners (e.g. Kalman/LMS filters, ELMs, RBF, sparse-coding models, and SNNs) that can run at low latency on hardware~\cite{kalman_fpga,lms_fpga,extreme_fpga,sparse_coding,radial_basis_online,spiking_online}.
However, these methods either assume restricted model classes (e.g. linear/Gaussian), rely on fixed/random feature maps (e.g. ELM/RBF), or require non-standard training pipelines with non-deterministic timing (e.g. SNNs). These properties make them unsuitable baselines for \emph{model-free, gradient-based} online learning with \emph{deterministic} sub-\(\mu\)s updates.
Moreover, prior studies focus on inference or task-specific update rules rather than end-to-end training pipelines.

\section{Theoretical Motivation}

\textbf{Setting.} We compare MLPs and KANs under fixed-point quantization, where quantized values are spaced evenly along a static range.

\textbf{Notation.}
An MLP layer $\ell$ is $\Phi_\ell(x)=\sigma(W_\ell x+b_\ell)$ with $W_\ell\in\mathbb{R}^{d_{\mathrm{out}}\times d_{\mathrm{in}}}, b_\ell \in \mathbb{R}^{d_\mathrm{out}}$.
A KAN layer $\ell$ applies learnable univariate spline maps on edges:
\[
x_{\ell+1, q} = \sum_{p=1}^{n_\ell} \phi_{\ell,q,p}(x_{\ell, p}),
\quad
\phi_{\ell,q,p}(x) = \sum_{i=1}^{G+S} w_{\ell,q,p,i}\,B_i(x),
\]
where $\{B_i\}_{i=1}^{G}$ are $S$-th order B-spline basis functions on a grid of size $G$.

Throughout this analysis, we omit bias terms in MLPs and base activation terms in KANs for notational simplicity. These omissions do not affect the complexity or boundedness results, as both terms follow the same structural arguments as their respective main terms.

% ----------------------------
\subsection{B-Spline Locality Enables Sparse Updates}
\label{sec:locality_sparse}

The computational advantage of KANs for on-chip learning stems from B-spline basis functions having local support.

\begin{lemma}[Local support of B-splines~\cite{KAN}]
\label{lemma:locality}
For spline order $S$, each basis function $B_i$ has local support.
For any input coordinate $x_p$, exactly $S+1$ indices $i$ satisfy $B_i(x_p) \neq 0$.
\end{lemma}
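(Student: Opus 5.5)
We work in the standard KAN construction: an extended, uniform knot vector $t_1<t_2<\dots<t_{G+2S+1}$ with $S$ extra knots padded on each side of the $G$ grid intervals covering the input domain, yielding $G+S$ basis functions $B_1,\dots,B_{G+S}$ of order $S$ (degree $S$). The plan is to use the Cox--de Boor recursion
\[
B_{i,0}(x)=\mathbf{1}[t_i\le x<t_{i+1}],\qquad
B_{i,k}(x)=\tfrac{x-t_i}{t_{i+k}-t_i}B_{i,k-1}(x)+\tfrac{t_{i+k+1}-x}{t_{i+k+1}-t_{i+1}}B_{i+1,k-1}(x),
\]
and establish two facts by induction on $k$. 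The first is \emph{support}: $B_{i,k}$ vanishes outside $[t_i,t_{i+k+1})$. The second is \emph{positivity}: $B_{i,k}>0$ on the open interval $(t_i,t_{i+k+1})$.

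\textbf{Support.} The base case $k=0$ is immediate. For the inductive step, both terms of the recursion are supported in $[t_i,t_{i+k})\cup[t_{i+1},t_{i+k+1})\subseteq[t_i,t_{i+k+1})$.

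\textbf{Positivity.} On $(t_i,t_{i+k+1})$ both linear coefficients are nonnegative. On each knot subinterval, at least one term is strictly positive, either by the inductive hypothesis or, at the ends, because the coefficient is nonzero there.

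\textbf{Counting.} Fix $x_p$ in the domain, so $x_p\in[t_j,t_{j+1})$ for a unique $j$ with $S+1\le j\le G+S$. By the support fact, $B_{i,S}(x_p)\ne 0$ forces $t_i\le x_p<t_{i+S+1}$, i.e.\ $j-S\le i\le j$. This gives at most $S+1$ indices, all of which lie in $\{1,\dots,G+S\}$ by the range of $j$. For the lower bound, each of these $S+1$ functions has $[t_j,t_{j+1})$ inside its support interval, so positivity gives nonzero values on the interior $(t_j,t_{j+1})$. Hence exactly $S+1$ indices are nonzero there.

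\textbf{Main obstacle: knots.} When $x_p$ equals an interior knot $t_j$, the extreme function $B_{j-S,S}$ vanishes at $t_j$ for $S\ge1$, because it is continuous and its support ends there. So the literal count ``exactly $S+1$'' only holds for $x_p$ off the knot set, which is a measure-zero exception. At a knot, or at the right endpoint under the half-open convention, the correct statement is ``at most $S+1$, and exactly $S$ at interior knots.'' I will therefore prove the statement in the form ``at most $S+1$ nonzero, with equality for all $x_p$ not on a knot.'' This is the form the subsequent sparsity argument actually needs, since it only uses the upper bound on the number of active weights per edge.
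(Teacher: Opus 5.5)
The paper never proves this lemma. It imports it from \cite{KAN} and uses it only through $|\mathcal{K}_p|=S+1$ in the proof of Proposition~\ref{prop:complexity}. Your Cox--de Boor induction (support, positivity, then counting the window $j-S\le i\le j$) is the standard argument and makes the statement self-contained.

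Your diagnosis is also right. For $S\ge1$ the literal ``exactly $S+1$'' fails at every knot. It also fails for inputs outside the grid range $[t_{S+1},t_{G+S+1})$, where fewer of $B_1,\dots,B_{G+S}$ are active. So your restriction to the domain is necessary as well; the implementation enforces it by clamping $t$ to $[0,G)$.

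The citation buys brevity but carries an overstated quantifier. Your route gives the correct one: at most $S+1$ nonzero values, all inside a fixed window of $S+1$ consecutive indices, with equality off the knot set.

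Three repairs are needed.

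\textbf{The vanishing function at a knot.} With your convention $x_p=t_j\in[t_j,t_{j+1})$, the candidate that vanishes is $B_{j,S}$, whose support \emph{begins} at $t_j$; the recursion gives $B_{j,S}(t_j)=0$ for $S\ge1$ directly. By contrast, $B_{j-S,S}$ is supported on $[t_{j-S},t_{j+1})$ and is strictly positive at $t_j$. The nonzero set at a knot is therefore $\{j-S,\dots,j-1\}$, so your count of $S$ is still correct.

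\textbf{The positivity induction.} The hypothesis must also carry $B_{i,k-1}\ge0$ everywhere; otherwise one positive term could be cancelled by the other. This follows from support, positivity, and $B_{i,k}(t_i)=0$ for $k\ge1$. Both coefficients are in fact strictly positive on $(t_i,t_{i+k+1})$. When $k\ge2$, the open supports $(t_i,t_{i+k})$ and $(t_{i+1},t_{i+k+1})$ cover that interval. For $k=1$, the middle knot is handled by the half-open convention, $B_{i+1,0}(t_{i+1})=1$.

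\textbf{The equality in Proposition~\ref{prop:complexity}.} That proposition is stated as an equality, so an upper bound alone yields only $\le$. With fixed-point inputs, hitting a knot need not be a null event, so the ``measure zero'' remark does not rescue it. The clean fix is to define the active set as the window $\{j-S,\dots,j\}$ rather than $\{i: B_i(x_p)\neq0\}$. That window always has exactly $S+1$ elements and contains every nonzero index. It is also precisely what the hardware reads and updates ($\texttt{Ws}[q][p][k+r]$, $r=0,\dots,S$), so the equality is restored.
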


This locality directly reduces per-sample update cost:

\begin{proposition}[Update complexity reduction]
\label{prop:complexity}
Let $\mathcal{C}_{\mathrm{update}}(\cdot)$ denote the number of scalar arithmetic operations required to compute parameter gradients for one sample.
Under an equal parameter budget $N$,
\[
\mathcal{C}_{\mathrm{update}}(\mathrm{KAN})
= \frac{S+1}{G+S}\,\mathcal{C}_{\mathrm{update}}(\mathrm{MLP}).
\]
\end{proposition}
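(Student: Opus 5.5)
The plan is to count, for a single sample, how many parameter gradients are structurally nonzero in each architecture, and to charge each such gradient a constant number of scalar operations. The equal budget $N$ fixes the total parameter count in both models. I will therefore express $\mathcal{C}_{\mathrm{update}}$ as (constant per active parameter) $\times$ (number of active parameters). The MLP then contributes $\Theta(N)$ active parameters and the KAN contributes $\frac{S+1}{G+S}N$.

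\textbf{MLP side.} For a layer $\Phi_\ell(x)=\sigma(W_\ell x)$ the gradient is $\partial \mathcal{L}/\partial W_\ell = \delta_\ell x_\ell^\top$, where $\delta_\ell$ is the backpropagated error times $\sigma'$. Every entry $(\delta_\ell)_q (x_\ell)_p$ is generically nonzero, so it must be formed. Each costs one multiply, plus one multiply-add for the SGD step. Summing over layers, this gives $\mathcal{C}_{\mathrm{update}}(\mathrm{MLP}) = c\sum_\ell d_{\mathrm{out}}^{(\ell)} d_{\mathrm{in}}^{(\ell)} = cN$.

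\textbf{KAN side.} Here $\partial \mathcal{L}/\partial w_{\ell,q,p,i} = \delta_{\ell,q}\, B_i(x_{\ell,p})$. By Lemma~\ref{lemma:locality}, for fixed $(\ell,q,p)$ exactly $S+1$ of the $G+S$ indices $i$ give a nonzero factor $B_i(x_{\ell,p})$. The remaining gradients vanish identically and need no arithmetic; their indices are determined by the grid interval containing $x_{\ell,p}$. Each edge thus incurs $c(S+1)$ operations. The number of edges is $N/(G+S)$, since each edge carries $G+S$ coefficients. Hence $\mathcal{C}_{\mathrm{update}}(\mathrm{KAN}) = c(S+1)\,N/(G+S)$, and dividing by the MLP count yields the stated ratio.

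\textbf{Main obstacle.} The delicate step is justifying that the per-parameter constant $c$ is the same in both models, so that it cancels. Two overheads threaten this.

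First, the backpropagated signals $\delta$ and the basis values $B_i(x)$ must themselves be computed. I would argue these are shared across parameters: $\delta_{\ell,q}$ is computed once per output node, and the $S+1$ active $B_i(x_{\ell,p})$ once per input coordinate. Their cost is therefore lower order than the per-parameter term, or is counted in the forward/backward pass rather than in "computing parameter gradients."

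Second, the KAN error propagation $\delta_{\ell,p} = \sum_q \delta_{\ell+1,q}\phi'_{\ell,q,p}(x_{\ell,p})$ itself touches only $S+1$ active derivative coefficients per edge. It therefore inherits the same sparsity factor and does not spoil the ratio.

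Under the convention of counting only the operations that form gradient entries and apply updates, with these shared quantities amortized, the equality holds exactly. I would state this convention explicitly at the start of the proof.
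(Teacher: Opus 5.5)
Your proposal is correct and is essentially the paper's argument: charge a fixed cost per structurally nonzero gradient entry, which gives $N$ for the dense outer product $g\,x^\top$ in the MLP and $(S+1)$ per edge times $N/(G+S)$ edges for the KAN, hence the ratio $(S+1)/(G+S)$. The paper does this for a single layer with one multiplication per entry and tacitly uses the convention you state explicitly (shared quantities such as $g_q$, the active basis values, and the cell index are not counted); your summation over layers and generic per-entry constant $c$ are harmless refinements of that argument.
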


\noindent The proof appears in Appendix~\ref{app:update_complexity}.
The key insight is that while MLPs produce dense gradients over all $N$ parameters, KANs update only the $S+1$ active coefficients per edge, \textbf{independent of total capacity $N$}.

\subsection{Capacity Scaling at Constant Compute}
\label{sec:capacity}

Proposition~\ref{prop:complexity} has a powerful implication for capacity scaling.
KAN approximation quality improves with grid resolution: for sufficiently smooth targets and $0 \le m \le S$, there exist spline maps on a grid of size $G$ such that~\cite{KAN}
\[
\bigl\| f - f_G \bigr\|_{C^m} \;\le\; C\,G^{-S-1+m}.
\]
Crucially, increasing $G$ adds \emph{stored} coefficients but not \emph{active} computation.
By Lemma~\ref{lemma:locality}, each edge still reads and updates only $S+1$ coefficients per sample.
In contrast, scaling MLP capacity requires proportionally more MACs and gradient updates.

This decoupling also benefits continual learning: because each sample modifies only coefficients in a local region of the input space, updates preserve prior fits elsewhere, reducing catastrophic forgetting compared to dense MLP gradients.

% --------------------
\paragraph{Hardware Takeaway.}
On custom hardware, per-sample latency and memory bandwidth scale with the size of the \emph{active} coefficient set $S+1$, not total parameters $N$.
KANs thus expose a clean capacity knob: scaling the grid size $G$ improves approximation capabilities while per-sample compute remains essentially constant.
The main cost is extra coefficient storage (memory depth), a favorable trade-off on-chip (LUT/FF/BRAM) that typically adds little overhead compared to arithmetic compute on custom hardware.

\subsection{Robustness to Fixed-Point Quantization}

KANs possess properties that make them more robust to fixed-point quantization noise as compared to MLPs.

\begin{proposition}[KAN Activation Bounds]
\label{prop:scale}
Consider a KAN activation $\phi(x) = \sum_{i} W_i B_i(x)$. Then, for any input $x$,
\[
\min_i W_i \leq \phi(x) \leq \max_i W_i.
\]
\end{proposition}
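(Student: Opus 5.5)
The plan is to show that $\phi(x)$ is a convex combination of the coefficients $W_i$, using two standard facts about B-splines. First, each basis function is nonnegative: $B_i(x)\ge 0$ for all $x$. Second, on the spline domain the basis forms a partition of unity: $\sum_i B_i(x)=1$. Given both facts, the bound follows at once:
\[
\phi(x)=\sum_i W_i B_i(x) \le \Bigl(\max_j W_j\Bigr)\sum_i B_i(x) = \max_j W_j ,
\]
and the lower bound with $\min_j W_j$ is obtained the same way. By Lemma~\ref{lemma:locality}, only $S+1$ terms are nonzero, so one can even sharpen the bound to the min/max over the active coefficients. This is not needed for the statement.

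\textbf{Establishing the two facts.} I would prove nonnegativity and partition of unity together by induction on the order $S$, using the Cox--de Boor recursion.
\begin{itemize}
\item Base case $S=0$: the $B_i$ are indicator functions of the half-open knot intervals $[t_i,t_{i+1})$. These are nonnegative, and exactly one of them equals $1$ at any $x$ in the domain.
\item Nonnegativity step: the recursion writes
\[
B_{i,S}(x)=\omega_{i,S}(x)\,B_{i,S-1}(x)+\bigl(1-\omega_{i+1,S}(x)\bigr)\,B_{i+1,S-1}(x),
\qquad
\omega_{i,S}(x)=\frac{x-t_i}{t_{i+S}-t_i}.
\]
On the support of $B_{i,S-1}$ we have $\omega_{i,S}\in[0,1]$, so both weights are nonnegative. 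Nonnegativity then carries over from order $S-1$.
\item Partition-of-unity step: sum the recursion over $i$ and reindex the second term. The $\omega$ contributions cancel in pairs, leaving $\sum_i B_{i,S-1}(x)=1$.
\end{itemize}

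\textbf{Main obstacle.} The delicate step is the domain and boundary bookkeeping in the partition-of-unity identity. The telescoping is exact only where the full set of $G+S$ basis functions covers $x$. This holds on the interior grid range $[t_S, t_{G+S}]$, which is the interval spanned by the $G$ grid cells. Near the ends, the reindexed sum picks up boundary terms that vanish only when $x$ lies in this range. This is exactly the KAN construction of~\cite{KAN}, whose grid is extended by $S$ knots on each side.

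The statement quantifies over all inputs $x$, so I need to address what happens outside the grid range. There partition of unity fails: the B-splines sum to something in $[0,1]$, and in fact to zero far away. Then $\phi(x)$ is a convex combination of the $W_i$ together with $0$. Two ways to handle this:
\begin{itemize}
\item Interpret ``any input $x$'' as any input in the grid domain. This is the natural reading, since KAN inputs are clipped or normalized to the grid range, and fixed-point inputs lie in a static range anyway.
\item Alternatively, note that if inputs are clamped to the grid range, the inequality holds verbatim.
\end{itemize}
I would state this domain assumption explicitly. The induction itself I would cite as the standard de Boor argument rather than reprove it in full.
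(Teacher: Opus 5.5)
Your proposal is correct and follows the same route as the paper, whose proof also invokes only nonnegativity and partition of unity of normalized B-splines, writing $\phi(x)-m=\sum_i (W_i-m)B_i(x)\ge 0$ and $M-\phi(x)=\sum_i (M-W_i)B_i(x)\ge 0$. The paper takes those two facts as given instead of deriving them via Cox--de Boor, and it restricts to ``any $x$ in the domain,'' which is exactly the caveat you flag (and matches the clamping to the grid in the hardware implementation).
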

This proposition is proven in Appendix~\ref{app:spline_output_bounds}.

\paragraph{Remark.}
Unlike MLPs, where outputs scale with inputs via multiplication, KAN activations are convex combinations of learned coefficients, which are bounded and stable regardless of input magnitude. For a given input range, B-spline functions also tend to produce outputs that are uniformly distributed in magnitude. This inherent \textit{magnitude normalization} in KANs is essential for stability under fixed-point quantization, where representing wide dynamic ranges is challenging.

\begin{proposition}[Bounded Gradient Sensitivity]
\label{prop:robustness}
Consider layer $\ell$, where $g_{\ell+1,q} = \frac{\partial \mathcal{L}}{\partial x_{\ell+1,q}}$ denotes the backpropagated gradient received by that layer.

\paragraph{MLP.}
The weight gradient satisfies
\[
\frac{\partial \mathcal{L}}{\partial W_{\ell,q,p}} = g_{\ell+1,q} \cdot x_{\ell,p},
\]
scaling linearly with input magnitude $|x_{\ell,p}|$.

\paragraph{KAN.}
The coefficient gradient satisfies
\[
\frac{\partial \mathcal{L}}{\partial w_{\ell,q,p,i}} = g_{\ell+1,q} \cdot B_i(x_{\ell,p}),
\]
where $B_i(x) \in [0,1]$ by B-spline nonnegativity and partition of unity.

\end{proposition}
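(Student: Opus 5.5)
The plan is a direct chain-rule computation for each architecture, followed by two standard B-spline facts that bound the KAN factor.

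\textbf{MLP case.} Write the pre-activation as $z_{\ell,q} = \sum_p W_{\ell,q,p}\, x_{\ell,p}$, with the bias omitted as stipulated. Interpret $g_{\ell+1,q}$ as the gradient with respect to the quantity that depends linearly on $W_{\ell,q,p}$, namely the pre-activation; if instead it is taken with respect to the post-activation, absorb the factor $\sigma'(z_{\ell,q})$ into $g_{\ell+1,q}$. Only $z_{\ell,q}$ depends on $W_{\ell,q,p}$, so the chain rule gives
\[
\frac{\partial \mathcal{L}}{\partial W_{\ell,q,p}} = g_{\ell+1,q}\,\frac{\partial z_{\ell,q}}{\partial W_{\ell,q,p}} = g_{\ell+1,q}\, x_{\ell,p}.
\]
This is linear in $x_{\ell,p}$, which is the claimed scaling with input magnitude.

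\textbf{KAN case.} The coefficient $w_{\ell,q,p,i}$ enters only $x_{\ell+1,q} = \sum_{p'} \phi_{\ell,q,p'}(x_{\ell,p'})$, and only through the term $\phi_{\ell,q,p}$. Since $\phi_{\ell,q,p}$ is linear in its coefficients, $\partial x_{\ell+1,q}/\partial w_{\ell,q,p,i} = B_i(x_{\ell,p})$, and the chain rule gives the stated formula. Here $x_{\ell,p}$ is treated as fixed, because it depends only on earlier-layer parameters.

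It then remains to show $B_i(x)\in[0,1]$.
\begin{itemize}
\item[(a)] \emph{Nonnegativity.} I would prove $B_i \ge 0$ by induction on the order using the Cox--de Boor recursion
\[
B_{i,k}(x)=\frac{x-t_i}{t_{i+k}-t_i}B_{i,k-1}(x)+\frac{t_{i+k+1}-x}{t_{i+k+1}-t_{i+1}}B_{i+1,k-1}(x).
\]
On the support of each term, both weights are nonnegative, and the base case is an indicator function.
\item[(b)] \emph{Partition of unity.} I would show $\sum_i B_i(x)=1$ on the valid domain, again by induction via the same recursion. The key is that the coefficients of each $B_{j,k-1}$, collected from the two adjacent terms, sum to $1$.
\end{itemize}
Nonnegativity together with a sum of one forces each $B_i(x)\le 1$. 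Consequently $|\partial\mathcal{L}/\partial w_{\ell,q,p,i}| \le |g_{\ell+1,q}|$, independently of $|x_{\ell,p}|$. By Lemma~\ref{lemma:locality}, at most $S+1$ of these gradients are nonzero. The same partition-of-unity fact also underlies Proposition~\ref{prop:scale}, so it can be shared with that proof.

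\textbf{Main obstacle.} The only delicate point is the domain on which partition of unity holds. On an extended knot grid it holds only on the interior interval, and inputs outside the grid give $B_i(x)=0$. I would handle this by noting that the upper bound $B_i\le 1$ still holds everywhere, since outside the interior interval $\sum_i B_i \le 1$. Nonnegativity is likewise unaffected, so the bound $B_i(x)\in[0,1]$ holds for all inputs.
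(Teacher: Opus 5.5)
Your proposal is correct and follows the paper's proof. Both use the same chain-rule computation for the MLP pre-activation and for the spline, which is linear in its coefficients, and then invoke nonnegativity and partition of unity to get $B_i(x)\in[0,1]$ and hence $|\partial\mathcal{L}/\partial w_{\ell,q,p,i}|\le|g_{\ell+1,q}|$. The paper simply cites the two B-spline facts, and asserts partition of unity for all $x$. Your Cox--de Boor induction, and your remark that only $\sum_i B_i(x)\le 1$ holds off the interior interval, make the same argument more carefully rather than taking a different route.
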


\noindent The proof for this proposition is provided in Appendix~\ref{app:bounded_gradient_sensitivity}.

\paragraph{Remark.}
While Proposition~\ref{prop:scale} bounds \emph{forward} activations within the coefficient range, this proposition bounds \emph{backward} error propagation in gradients within the B-spline envelope.
Together, they ensure that both forward and backward passes remain numerically stable under aggressive fixed-point quantization, a critical property for resource-constrained on-chip implementations where wide dynamic range is impractical.

% ----------------------------

\begin{figure*}[htp]
    \centering
    \includegraphics[width=\textwidth]{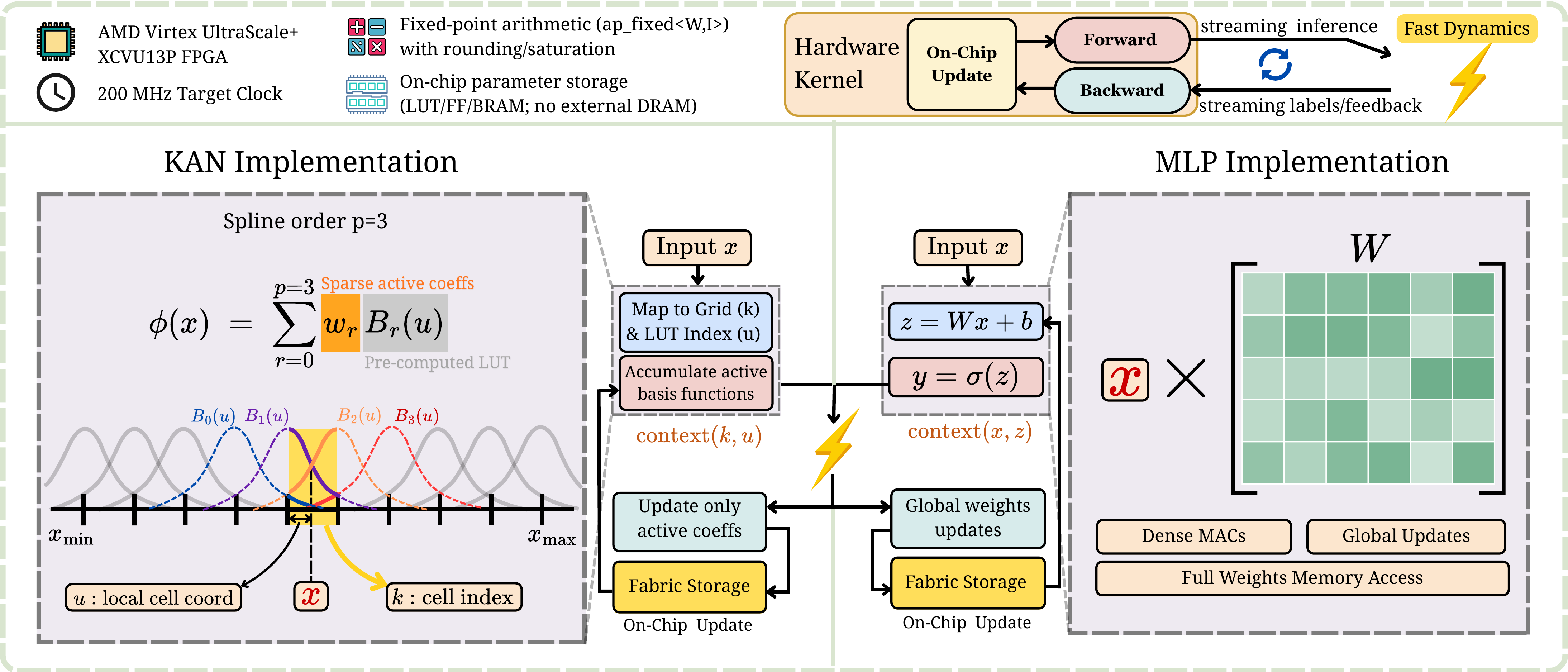}
    \caption{\textbf{Overview of our streaming custom hardware kernels for fully online learning.} KAN (left) and MLP (right) are synthesized in Vitis HLS as single hardware kernels (200~MHz, AMD Virtex UltraScale+ XCVU13P) that process streaming inputs and perform forward inference, backward gradient propagation, and in-place on-chip parameter updates with deterministic latency.
    \textbf{KAN:} each input $x$ is mapped to a grid cell index $k$ and LUT index $u$; only the $(S{+}1)$ active B-spline basis functions are read from a small ROM LUT and accumulated, and the cached $(k,u)$ context is reused in backprop to update only active coefficients. Gradients are computed using B-spline derivatives stored in LUTs.
    \textbf{MLP:} layers compute $z=Wx+b$, $y=\sigma(z)$ with dense MACs and global weight/bias updates using cached pre-activations.
   LUTs and per-sample context buffers are fully partitioned, while trainable parameters reside entirely on-chip (LUTRAM/BRAM/FF depending on synthesis) with explicit array partitioning to expose parallel accesses (no external DRAM).}

    \label{fig:hardware_implementation}
\end{figure*}

\section{Custom Hardware Implementation}
\label{sec:fpga_impl}
Figure~\ref{fig:hardware_implementation} contrasts our custom FPGA training kernels for KAN and MLP under identical streaming I/O, fixed-point arithmetic, and fully on-chip state.
Here we highlight only the key implementation choices.
Full details are provided in Appendix~\ref{app:fpga_impl_details}.

\textbf{KAN framework exploiting locality.}
Prior KAN implementations are GPU-centric and typically evaluate B-splines via expensive de Boor's recursion, which obscures and underutilizes local support.
We instead exploit B-spline locality explicitly: spline basis values/derivatives are stored in small precomputed \textbf{ROM LUTs}, and training performs \textbf{index-driven updates of only the $(S{+}1)$ active coefficients} per edge. Then, per-sample bandwidth/compute scales with the active set rather than the full parameter tensor.

\textbf{MLP baseline.}
The MLP uses the same streaming interface and update rule, but performs dense $Wx{+}b$ MACs and dense (global) weight/bias updates, with a small context buffer caching layer inputs and pre-activations. 

\section{Experiments}

\begin{table*}[h]
\centering
\caption{\textbf{Experimental configurations.}
All models use single-batch updates.
Fixed-point formats are reported as $\langle W,I\rangle$.
Learning rates and architectures are grid-searched; ReLU outperforms Tanh/SiLU on hardware.
Metric reports the task objective: cumulative regret ($\downarrow$) for regression, accuracy ($\uparrow$) for classification, and episodic return ($\uparrow$) for control.
}
\label{tab:para_experiment_summary}

% --- PREAMBLE REQUIREMENTS ---
% Make sure \usepackage[table]{xcolor} and \usepackage{multirow} are in your document preamble.

% Define custom colors
\definecolor{lightgray}{gray}{0.95}
\definecolor{bestcell}{rgb}{0.88, 0.94, 1.0} % Subtle Blue

\resizebox{\textwidth}{!}{%
    \begin{tabular}{llccccccccc}
    \toprule
    % --- HEADER ---
    \rowcolor{gray!20} 
    \textbf{Experiment} &
    \textbf{Model} &
    \textbf{Dimension} &
    \textbf{Activation} &
    \textbf{\# Params} &
    \textbf{$G$} &
    \textbf{$s$} &
    \textbf{$\eta$} &
    \textbf{Bitwidth} &
    \textbf{Metric ($\downarrow$Reg / $\uparrow$Acc / $\uparrow$Ret)} &
    \textbf{Task} \\
    \midrule

    % --- GROUP 1: Adaptive Function Approximation (3 rows) ---
    % Row 1
    & KAN   & [1,1]        & - & 13  & 10  & 3 & 0.5 & $\langle6,2\rangle$ & \cellcolor{bestcell}\textbf{13.2} & Regression \\
    
    % Row 2 (Gray Row, but first cell is WHITE)
    \rowcolor{lightgray} \cellcolor{white}
    & MLP-P   & [1,2,2,1]    & ReLU   & 13  & -- & -- & 0.1 & $\langle6,2\rangle$ & 97.6 & Regression \\
    
    % Row 3 (Multirow placed here looking UPWARDS [-3])
    \multirow{-3}{*}{\parbox{2.8cm}{\raggedright Adaptive Function\\Approximation}}
    & MLP-L   & [1,16,16,1]  & ReLU   & 321 & -- & -- & 0.1 & $\langle6,2\rangle$ & 48.3  & Regression \\
    \midrule

    % --- GROUP 2: Single-Shot Qubit Readout (3 rows) ---
    % Row 1
    & KAN   & [2, 7, 1]        & - & 273  & 10  & 3 & 0.05 & $\langle7,3\rangle$ & \cellcolor{bestcell}\textbf{92.8\%} & Classification \\
    
    % Row 2 (Gray Row, but first cell is WHITE)
    \rowcolor{lightgray} \cellcolor{white}
    & MLP-P   & [2,20,8,5,1]      & ReLU   & 279  & -- & -- & 0.01 & $\langle7,3\rangle$ & 50.4\% & Classification \\
    
    % Row 3 (Multirow placed here looking UPWARDS [-3])
    \multirow{-3}{*}{\parbox{2.8cm}{\raggedright Single-Shot\\Qubit Readout}}
    & MLP-L   & [2,16,16,16,1]  & ReLU   & 609 & -- & -- & 0.015 & $\langle7,3\rangle$ & 49.0\%& Classification \\

    % Row 4 (Gray Row, but first cell is WHITE)
    \rowcolor{lightgray} \cellcolor{white}
    & OS-ELM   & 10 hidden features   & Sigmoid   & --  & -- & -- & -- & float & 41.8\% & Classification \\
    
    \midrule

    % --- GROUP 3: Non-Stationary Acrobot Control (4 rows) ---
    % Row 1
    & KAN-AC   & [6,4]        & - & 144 & 5 & 1 & $1\mathrm{e}{-3}$ & $\langle22,8\rangle$ & \cellcolor{bestcell}\textbf{-104} & Control (RL) \\
    
    % Row 2 (Gray Row, but first cell is WHITE)
    \rowcolor{lightgray} \cellcolor{white}
    & MLP-AC-P   & [6,13,4]     & ReLU   & 147 & -- & -- & $3\mathrm{e}{-4}$ &  $\langle22,8\rangle$ & -500 (incl. float) & Control (RL) \\
    
    % Row 3 (White Row)
    & MLP-AC-L   & [6,32,32,4]     & ReLU   & 1412 & -- & -- & $1\mathrm{e}{-4}$ &  $\langle22,8\rangle$ & -500 (incl. float) & Control (RL) \\
    
    % Row 4 (Gray Row + Multirow placed here looking UPWARDS [-4])
    \rowcolor{lightgray} \cellcolor{white}
    \multirow{-4}{*}{\parbox{2.8cm}{\raggedright Non-Stationary Acrobot Control}}
    & MLP-DQN & [6,128,128,3] & ReLU & 17,795 & -- & -- & $1\mathrm{e}{-6}$ & float & $-298$  & Control (RL) \\
    \bottomrule
    \end{tabular}%
}
\end{table*}

\begin{figure*}[t]
    \centering
    \includegraphics[width=\textwidth]{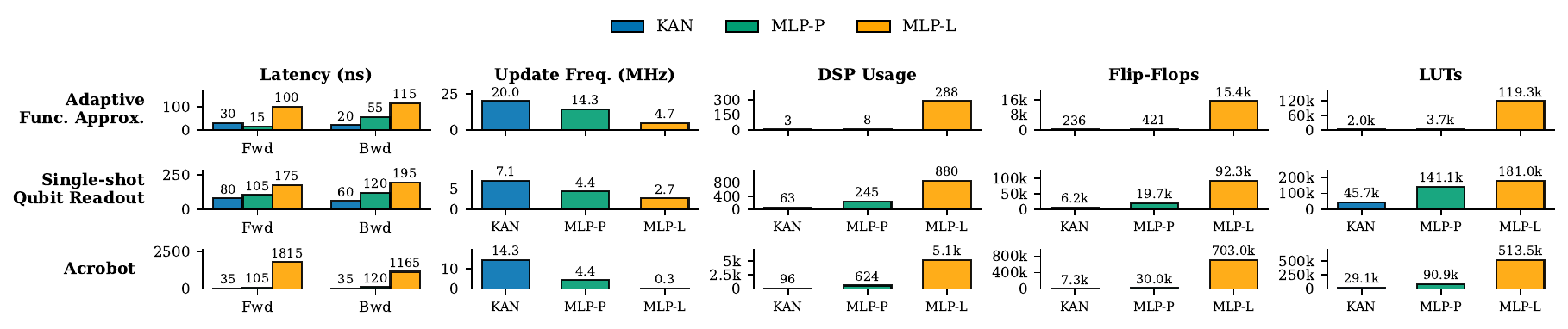}
    \caption{\textbf{Post-synthesis FPGA latency and resource cost for fully online learning.}
Models are synthesized for an AMD Virtex\texttrademark~UltraScale+\texttrademark~XCVU13P at 200~MHz.
Across tasks, KAN achieves the best latency-resource trade-off (DSP/FF/LUT) and the highest update rate (online updates/s, forward + backward). MLPs use more resources and/or diverge under fixed-point updates. No BRAM was used for either architecture across these experiments.
Full post-route place and validation is detailed in Appendix~\ref{app:postroute_validation}.
}
    \label{fig:hardware_comparison}
\end{figure*}

We evaluate our FPGA kernels in the target regime of real-time learning: batch-size-one updates, non-stationary streams, and fixed-point arithmetic with deterministic latency. We study three fully online benchmarks: drifting regression (sensor calibration drift), adaptive single-shot qubit readout, and non-stationary Acrobot control. We then analyze scalability in Section~\ref{sec:scalability} on higher-dimensional digit classification with continuous angular drift.

\subsection{Evaluation: Online learning under fixed-point}
\label{subsec:exp_setup}

\textbf{Online protocol.}
We adopt a standard online learning protocol~\cite{meta_online_learning}: at each time step $t$, the learner receives $\mathbf{x}_t$, produces $\hat{y}_t$, and performs a single update from the local feedback signal (supervised error or RL reward/advantage).
To ensure model non-stationarity, we assume the stream is governed by an unobserved context $\tau_t$ that controls both the data distribution and the optimal decision rule, where it may drift stochastically or adversarially.

\textbf{Models and baselines.}
We use MLPs as the primary baseline, since they are the dominant general-purpose function approximator and the architecture most directly comparable to KANs. More generally, the MLP baseline serves as a proxy for the broader class of non-local architectures (e.g. dense matrix multiplication layers).
In each experiment, we compare KANs against two fixed-role MLP baselines:
(i) \textbf{MLP-P}, parameter-matched to the KAN budget $N$, and
(ii) \textbf{MLP-L}, a larger MLP scaled to match the KAN’s stable asymptotic performance.
For Acrobot, we specify the online algorithm, Actor-Critic~\cite{Actor_Critic} or DQN~\cite{DQN}, when relevant.
For MLPs, we sweep depth/width and ReLU, Tanh, and SiLU activations, and report the best-performing design under the same online and numeric constraints.

We focus on streamlined MLPs because auxiliary techniques such as LayerNorm impose substantial hardware costs under fixed-point constraints, while neither LayerNorm nor data-dependent initialization methods such as LSUV~\cite{mishkin2016needgoodinit} close the performance gap (see Appendix~\ref{app:extended_mlp_baselines}).
MLPs also broadly encompass methods such as LMS filters, ELMs, and related approaches.
To justify this design choice, we evaluate Online Sequential Extreme Learning Machines (OS-ELMs) on the complex qubit readout task in Table~\ref{tab:para_experiment_summary}. Even with floating-point training, OS-ELMs perform poorly, suggesting that these non-gradient-trained shallow models lack the expressivity required for the learning settings considered here.
Temporal architectures like SNNs represent a distinct class of methods due to non-deterministic latency characteristics.

\textbf{Metrics.}
We report task-appropriate online performance metrics: cumulative regret for drifting regression, running accuracy for qubit readout, and episodic return for control.
We also report post-synthesis FPGA latency and resource utilization (LUTs/FFs/BRAMs/DSPs), which match full post-route implementations and confirm they successfully meet timing constraints (Appendix~\ref{app:postroute_validation}).

\textbf{Hyperparameters.}
Architectures, learning rates, and fixed-point formats are summarized in Table~\ref{tab:para_experiment_summary}.

\subsection{Benchmark 1: Adaptive Function Approximation Under Concept Drift}

\label{subsec:drift_experiment}

We first consider a controlled non-stationary regression task as a proxy for real-time sensor calibration and signal recovery, where the underlying transfer function changes and must be tracked online.

\textbf{Task.}
The task runs for $T=1500$ steps.
Inputs $x_t\sim\mathcal{U}[-1,1]$ and targets follow a piecewise latent function with regime changes at $t=500$ and $t=1000$:
\begin{equation*}
\label{eq:piecewise_signal}
y_t =
\begin{cases}
\sin(x_t) + 0.3x_t^2 & 0 \le t < 500 \\
-\cos(2x_t) + 0.1x_t^3 + 1.0 & 500 \le t < 1000 \\
\exp(-0.5(x_t - 1)^2) + 0.05x_t^3 & 1000 \le t < 1500~.
\end{cases}
\end{equation*}

\textbf{Protocol and metric.}
Models are updated online with batch size 1 using a fixed learning rate $\eta$ and the instantaneous gradient of the squared error.
We report cumulative regret $R_T=\sum_{t=1}^{T}\mathrm{MSE}_t$ to capture both convergence speed and adaptation after regime shifts.

\textbf{Results.}
Figure~\ref{fig:exp1_regret} shows that in floating point, KAN achieves the lowest cumulative regret and re-adapts immediately after each regime change, while the parameter-matched MLP-P fails to track the drift and its regret grows steadily.
A larger MLP-L can eventually recover, but adapts more slowly after shifts, yielding higher regret over the stream.
Under fixed-point quantization (2 integer bits), KAN remains stable across total bitwidths, with only mild degradation at very low precision. In contrast, MLP-L exhibits a sharp precision cliff: low bitwidth leads to large final regret and high variance, and increasing bitwidth is required to regain stable learning.
Figure~\ref{fig:exp1_guarantee} further shows that KAN benefits from increasing grid size $G$ until reaching a quantization-limited floor set by bitwidth, whereas for MLPs, increasing parameter count $N$ shows diminishing returns and amplifies sensitivity to quantization, destabilizing learning at low precision.

\begin{figure}[t]
    \centering
    \includegraphics[width=0.49\linewidth]{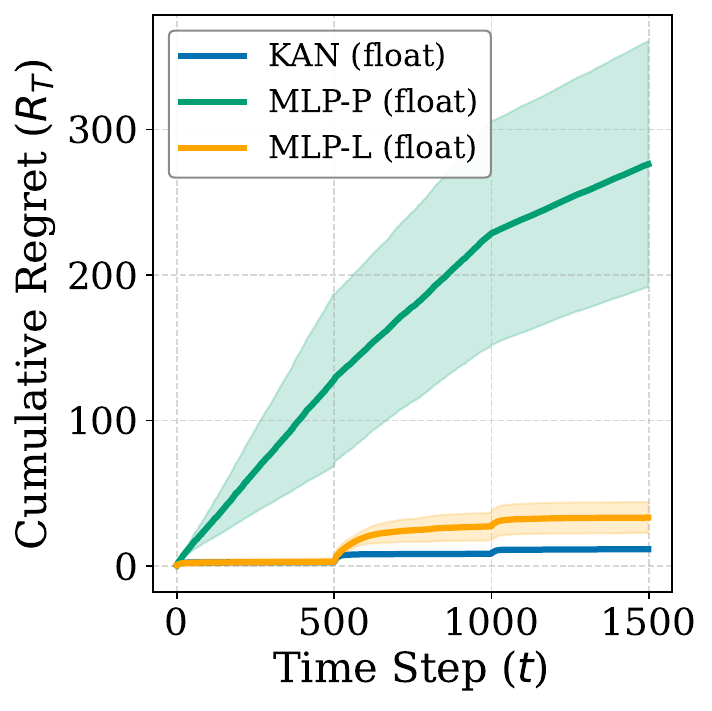}
    \includegraphics[width=0.49\linewidth]{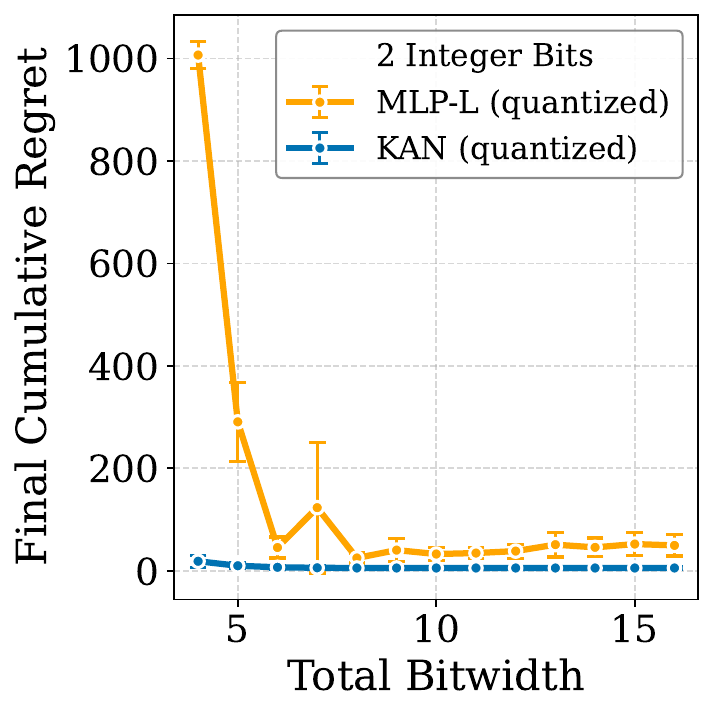}
    \caption{\textbf{Adaptive function approximation under concept drift.}
    \textbf{Left:} Cumulative regret with regime changes at $t=500,1000$: KAN adapts rapidly, MLP-P diverges, and MLP-L converges more slowly.
    \textbf{Right:} Final cumulative regret vs.\ fixed-point bitwidth (2 integer bits): KAN remains stable under quantization while MLP-L degrades at low precision.}
    \label{fig:exp1_regret}
\end{figure}

\begin{figure}[t]
    \centering
    \includegraphics[width=0.49\linewidth]{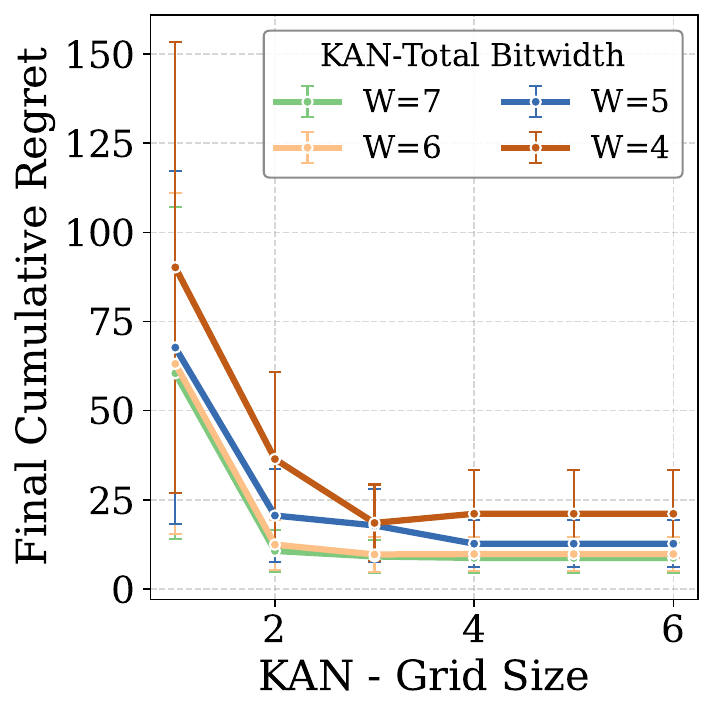}
    \includegraphics[width=0.49\linewidth]{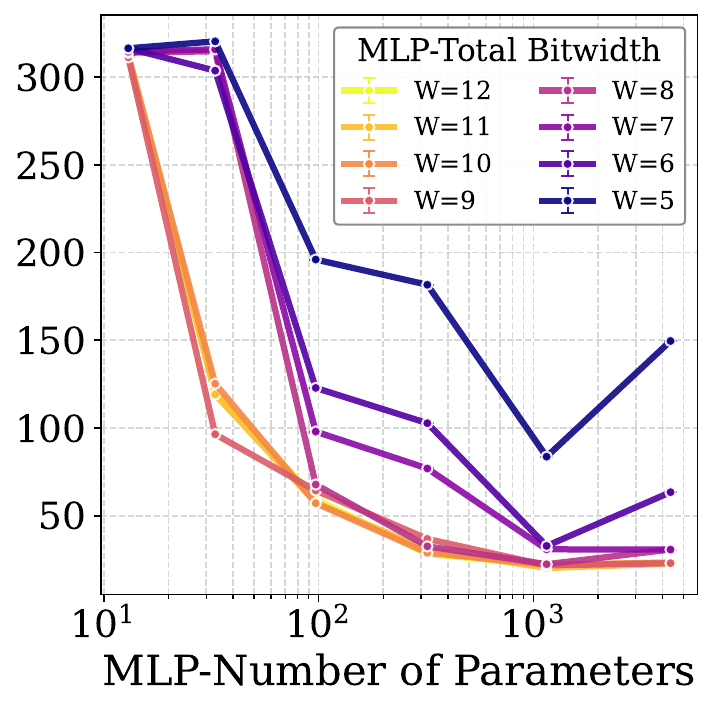}
    \caption{\textbf{Empirical validation of approximation under quantization.}
    \textbf{Left (KAN):} Error improves with grid size $G$ until reaching a quantization floor set by bitwidth.
    \textbf{Right (MLP):} Increasing parameter count $N$ amplifies sensitivity to quantization and destabilizes learning.}
    \label{fig:exp1_guarantee}
\end{figure}

% ----------------------------
\subsection{Benchmark 2: Adaptive Single-shot Qubit Readout}
\label{subsec:adaptive_qubit_readout}

FPGAs are central to quantum computing platforms for deterministic, ultra-low-latency readout, feedback, and calibration~\cite{EtoE_Qubit_Readout, real_time_axis_control}.
We cast single-shot qubit readout as an online binary classification problem with drifting, non-linearly separable decision boundaries.

\textbf{Task.}
Each time step yields a noisy IQ sample $\mathbf{x}_t=(I_t,Q_t)$ from a rotating XOR constellation with Gaussian noise, Kerr-type phase distortion, and slow global drift.
Full task details are provided in Appendix~\ref{apx:single_qubit_readout_details}.

\textbf{Protocol and metric.}
Models are updated online from local label feedback at each step, with no buffering.
We report the instantaneous accuracy and its running average over time.

\textbf{Results.}
Figure~\ref{fig:qubit_readout_main} illustrates fully online running accuracy under drifting, non-linearly separable IQ constellations and visualizes the learned decision boundary at a representative time.
Figure~\ref{fig:exp2_guarantee} sweeps capacity at fixed precision, KAN grid size $G$ versus MLP parameter count $N$. KAN performance improves monotonically with larger $G$, whereas for MLPs increasing $N$ is the only scaling knob and often destabilizes convergence under fixed-bit arithmetic.
Figure~\ref{fig:exp_2_G_scan} quantifies how our synthesized fixed-point online training kernels scale with trainable parameters.

\begin{figure}[t]
    \centering
    \includegraphics[width=0.49\linewidth]{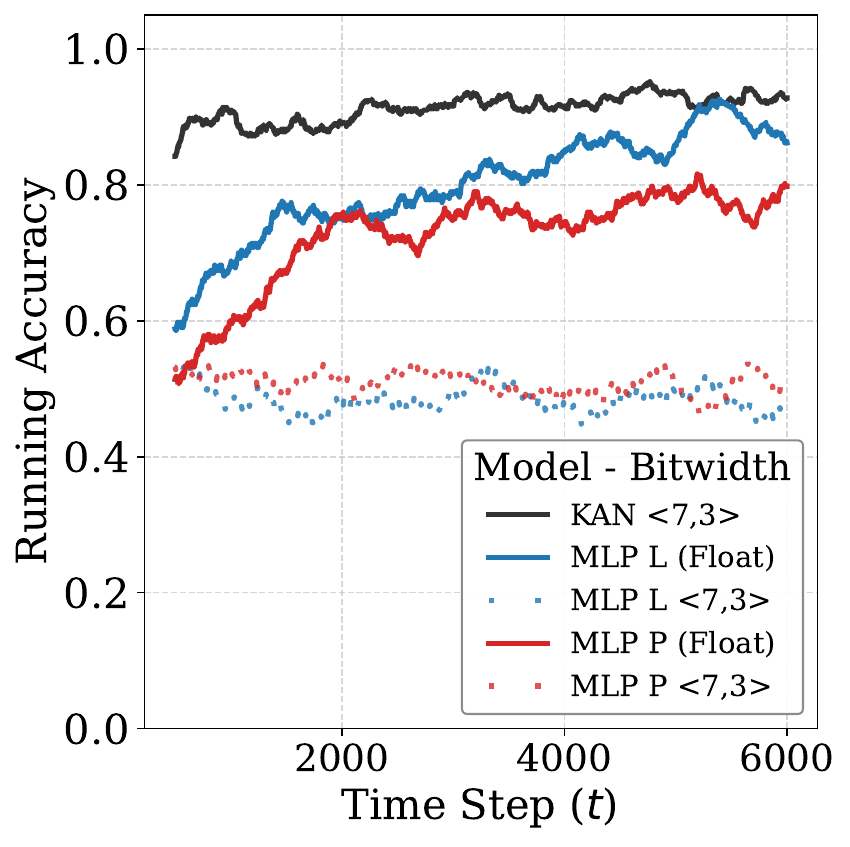}
    \includegraphics[width=0.49\linewidth]{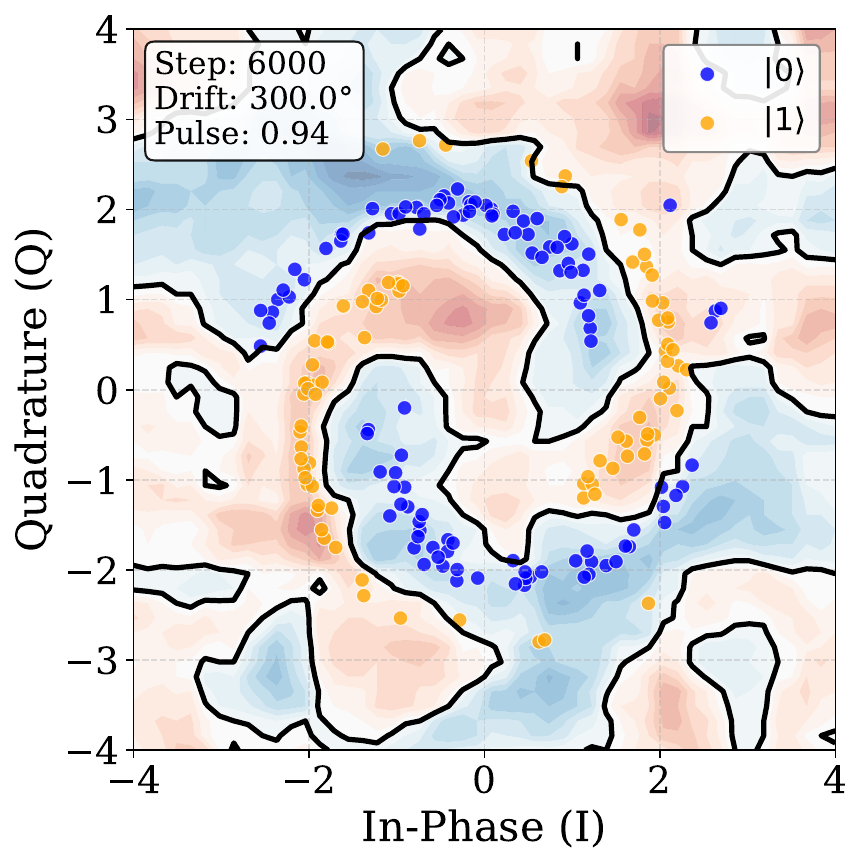}
    \caption{\textbf{Adaptive single-shot qubit readout.}
    \textbf{Left:} Online running accuracy. Quantized KAN tracks drift and avoids the collapse observed in quantized MLPs.
    \textbf{Right:} Snapshot of learned decision boundary at $t=6000$: KAN tracks the drifting IQ distributions despite large phase drift.}
    \label{fig:qubit_readout_main}
\end{figure}

\begin{figure}[t]
    \centering
    \includegraphics[width=0.49\linewidth, trim = 44 15 10 55, clip]{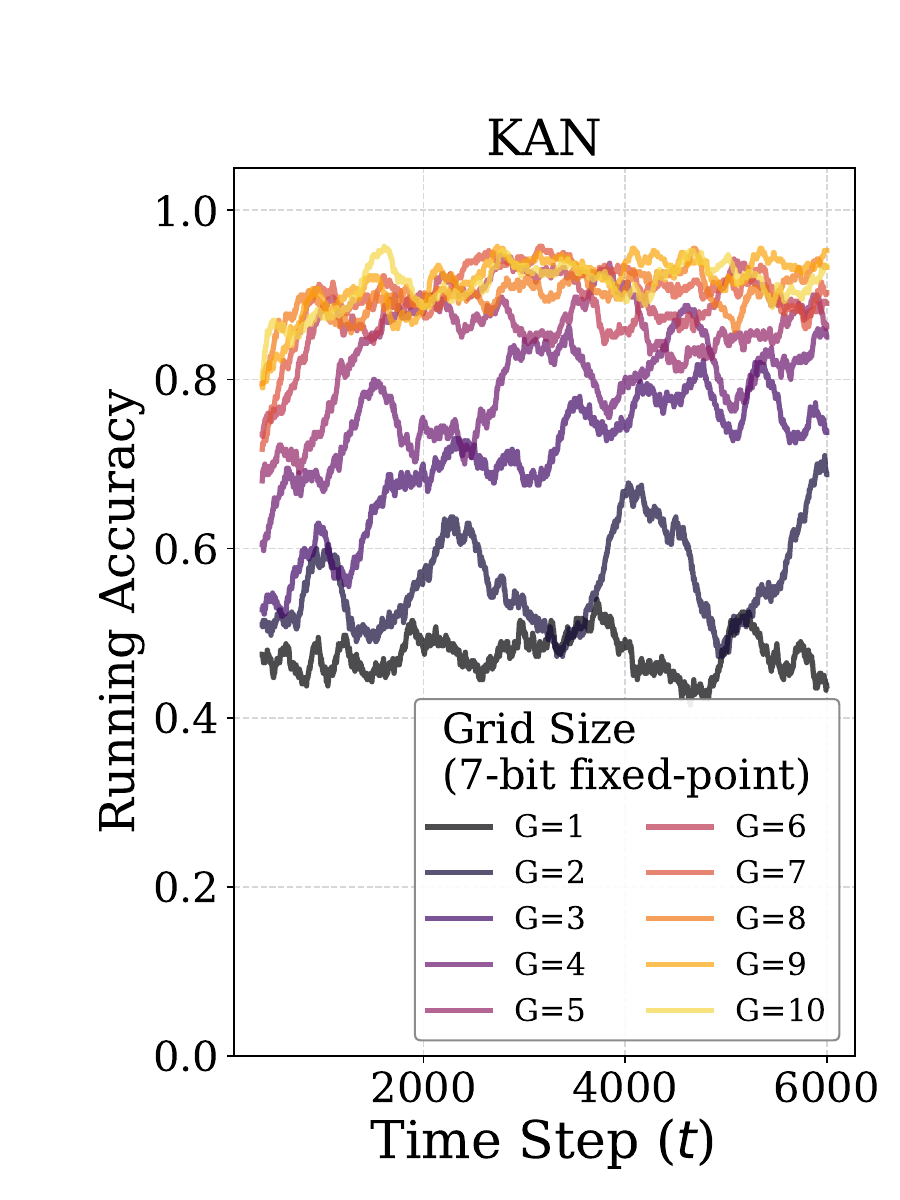}
    \includegraphics[width=0.49\linewidth, trim = 44 15 10 55, clip]{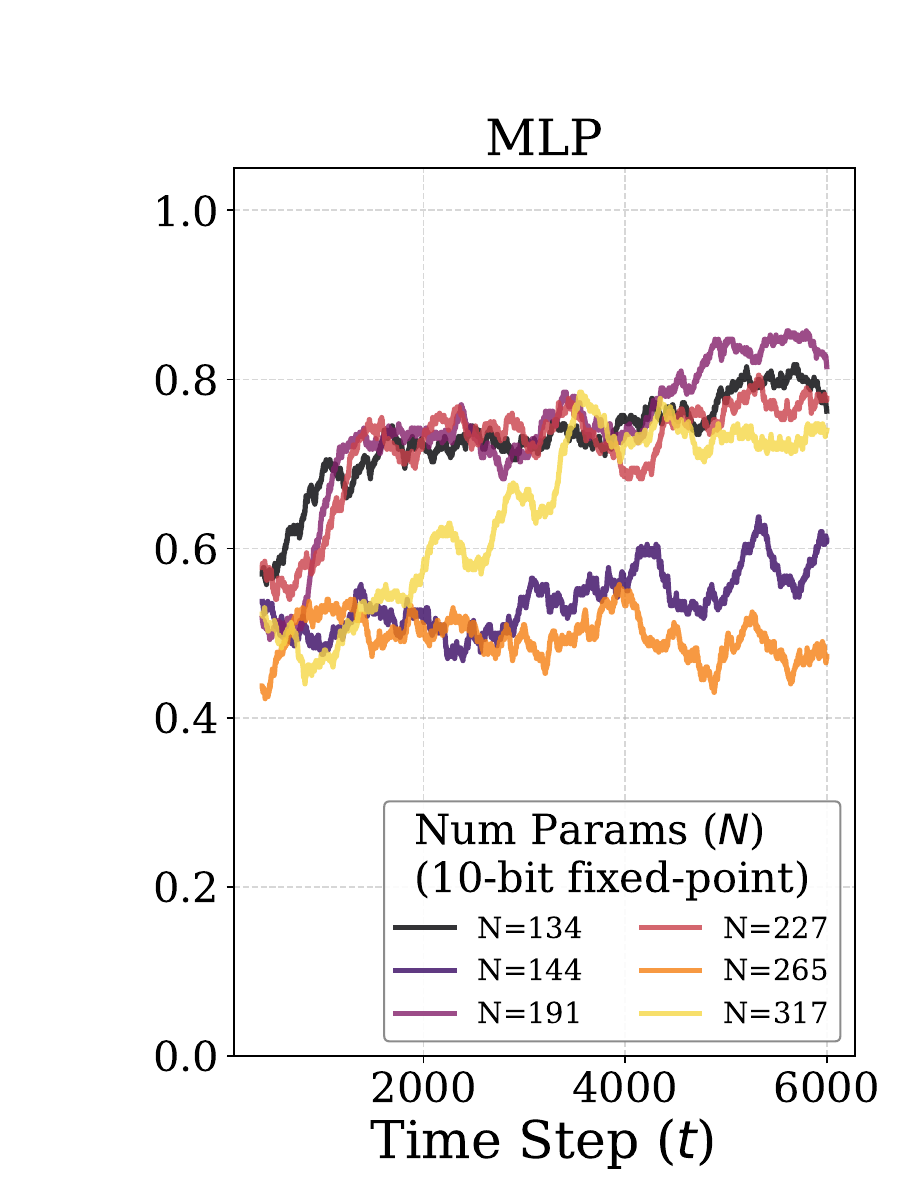}
    \caption{\textbf{Capacity-precision decoupling under drift.}
    \textbf{Left (KAN):} Accuracy improves monotonically with grid size $G$ even under aggressive quantization.
    \textbf{Right (MLP):} Under 10-bit fixed-point training, increasing parameter count $N$ yields only marginal accuracy gains and often destabilizes learning.}
    \label{fig:exp2_guarantee}
\end{figure}

% ----------------------------
\subsection{Benchmark 3: Online Policy Optimization under Non-stationary Dynamics}
\label{subsec:acrobot}

We evaluate the agent's ability to adapt to non-stationary physical dynamics on \texttt{Acrobot-v1}~\cite{Acrobot}.

\textbf{Task.}
The agent observes a 6D state and selects one of three torques.
To induce non-stationarity, we randomize the link masses ($m \in [0.8, 1.2]$) and lengths ($l \in [0.9, 1.1]$) at the start of every episode, requiring the agent to continually adjust its control policy.

\textbf{Online actor-critic.}
We use an \emph{online $N$-step actor-critic} update ($N{=}3$) with a shared network that outputs (i) action logits and (ii) a state-value estimate.
At each step, we sample an action from the softmax policy and perform a single batch update using an $N$-step bootstrapped return.
This setting is deliberately stringent: non-stationary dynamics, stochastic exploration, and fully online updates amplify gradient interference in dense networks.

\textbf{Baselines and metric.}
We compare (i) our fixed-point KAN policy/value network to (ii) MLP baselines (small/large, float) trained with \emph{identical} online actor-critic rule, and (iii) a streaming value-learning baseline: online DQN-style Q-learning ($\epsilon$-greedy, Huber TD loss) without replay.
We report episodic return over time (mean $\pm$ std over random seeds) and mark the environment's solved threshold.

\textbf{Results.}
As shown in Figure~\ref{fig:acrobot_rewards}, under per-episode randomization, KAN sustains online improvement with fixed-point updates, while actor-critic MLPs are markedly less stable even in floating point.
A larger MLP can recover only with the more stable value-learning update, supporting our claim that \emph{update sparsity mitigates interference in non-stationary streams} (Lemma~\ref{lemma:locality}).

\begin{figure}[htp]
    \centering
    \includegraphics[width=\linewidth]{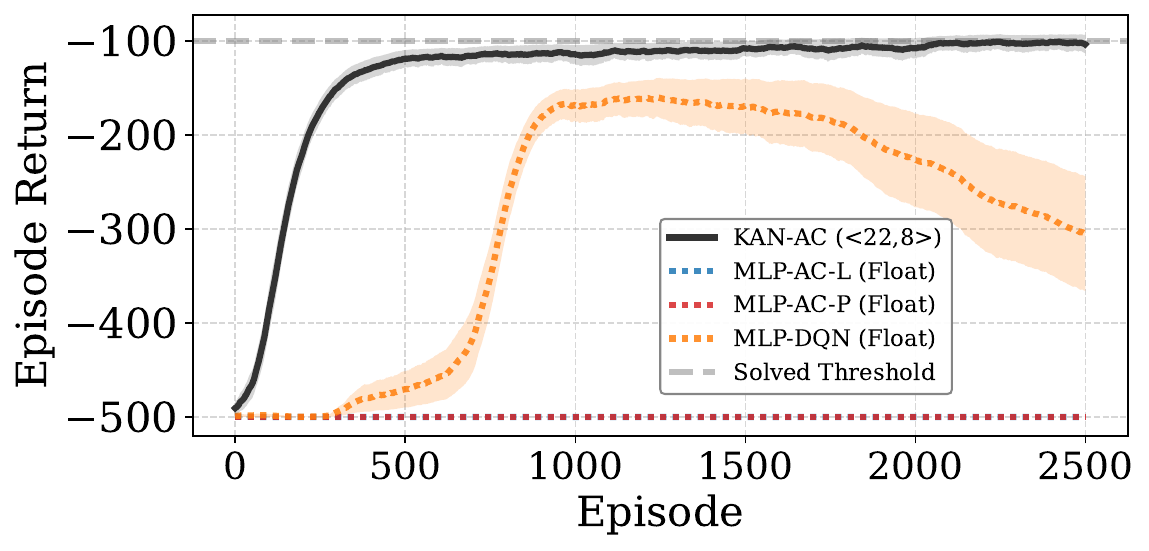}
\caption{\textbf{Non-stationary Acrobot control.}
Episodic return under randomized dynamics (link masses/lengths resampled each episode; mean$\pm$std over seeds).
The fixed-point KAN actor-critic (black) reaches the solved regime in $\sim$300 episodes and remains stable, while online actor-critic MLPs fail to adapt even at $10\times$ the KAN parameter budget.
An MLP only succeeds with a more stable online DQN update and a much larger network ($\sim$120$\times$), but learns more slowly and degrades under continued randomization.}
    \label{fig:acrobot_rewards}
\end{figure}

% ----------------------------
\subsection{Hardware results}
\label{subsec:hardware_results}

We report post-synthesis FPGA latency and resource utilization for all models in Figure~\ref{fig:hardware_comparison}.
B-spline locality yields sparse KAN updates, reducing DSP/FF/LUT usage and latency versus a parameter-matched MLP (Proposition~\ref{prop:complexity}), while keeping both forward and backward passes in the sub-100ns regime across all experiments.
To ensure these HLS estimates reflect physically realizable hardware, we also performed full post-route implementations for the primary KAN designs, confirming they successfully meet timing constraints and achieve sub-microsecond latency (Appendix~\ref{app:postroute_validation}).

\section{Scalability}
\label{sec:scalability}
The preceding benchmarks already establish the key comparative conclusions against dense MLPs: (i) locality yields sparse per-sample updates and lower update cost (Lemma~\ref{lemma:locality}, Proposition~\ref{prop:complexity}), and (ii) capacity can be increased via grid resolution \(G\) without increasing the active compute per sample (Section~\ref{sec:capacity}), while maintaining stability under fixed-point updates (Propositions~\ref{prop:scale}--\ref{prop:robustness}).
In this section, we thus focus on the scaling properties of online KANs through higher-dimensional tasks.

\subsection{Digit Classification with Online Drift}

\begin{figure}[t]
    \centering
    \includegraphics[width=0.9\linewidth]{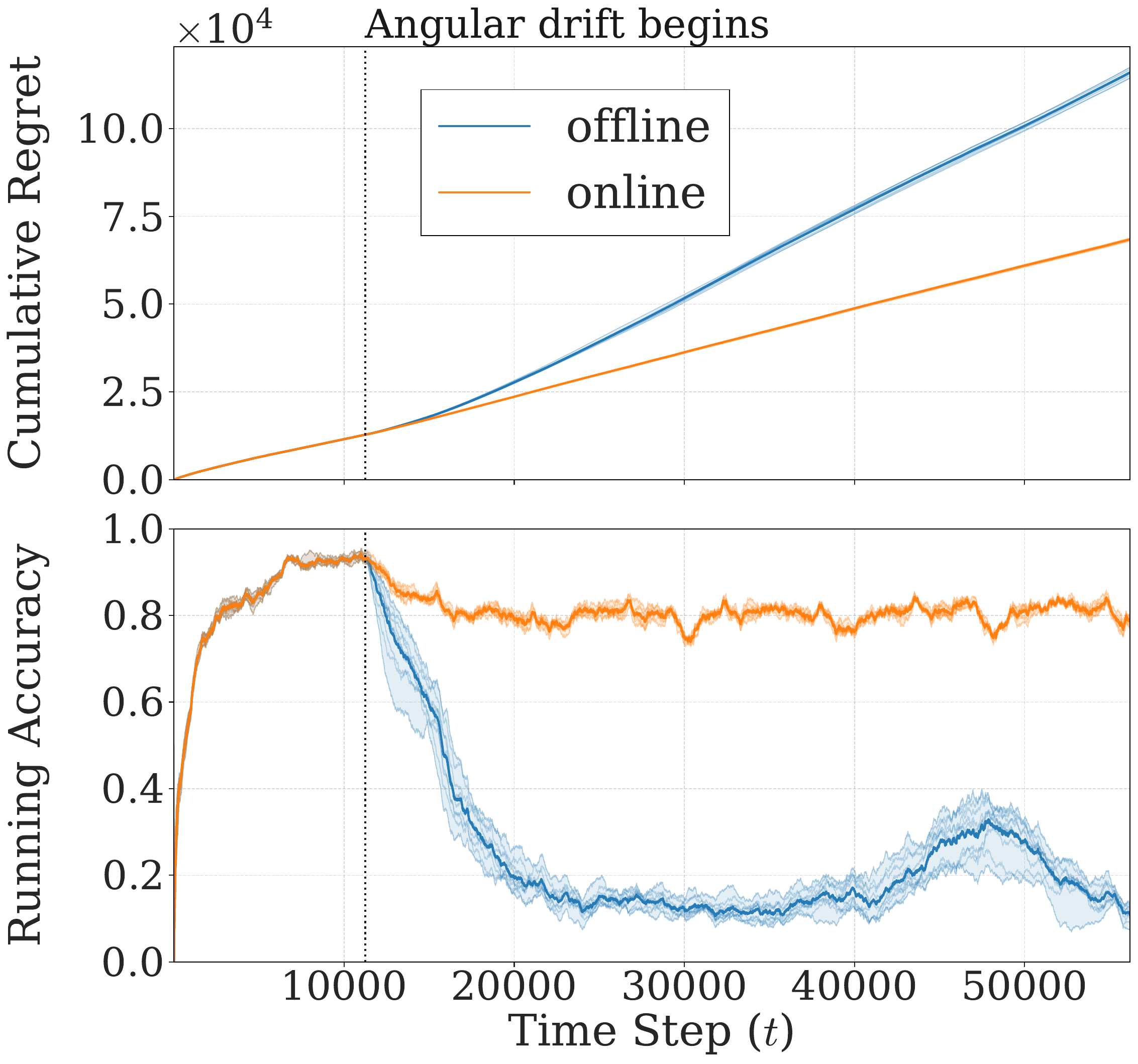}
    \caption{\textbf{Comparison of online and offline learner performance on digit classification with angular drift.} The online learner maintains an accuracy of 80\% even with extreme angular drifts.
    In comparison, the frozen learner is only able to consistently classify around 20\% of inputs correctly, with a cumulative regret over 50\% that of the online learner.}
    \label{fig:digits_compairson}
\end{figure}

We take the classic problem of digit classification and introduce a constant \textit{angular drift}, or rotation, over time. This drift is representative of many types of sensor degradations and distribution shifts that occur in real-time systems. By the final timestep, images have rotated nearly $230^\circ$, making rapid adaptation essential yet tricky for model-free online learners.

\textbf{Task Setup.} We use the UCI dataset containing 5,620 flattened $8\times8$ images of handwritten digits~\cite{digits}. The online learner is initially fed in $N_\text{stationary}=2$ epochs of images; after this, we train for $N_\text{rotating}=8$ epochs where the image on step $t$ is rotated by $\theta(t) = \omega t$, with $\omega = 0.005^\circ / \text{step}$.

\textbf{Evaluation.} To evaluate adaptation speed, we plot cumulative regret on the cross-entropy loss as well as running accuracy on the past 1,000 samples. To demonstrate the necessity of online learning, we compare to a baseline learner that does not update its parameters for the rotating epochs.

\textbf{Results.} Figure~\ref{fig:digits_compairson} demonstrates that the online learner is significantly more accurate and stable, demonstrating its ability to adapt to extreme distribution shifts when compared to a pretrained model.

\subsection{Ablation Studies}
We study scalability to higher-dimensional inputs by taking the architecture used for the digits task and scaling up input dimension, freezing all other hyperparameters. As shown in Figure~\ref{fig:scaling_ablation}, we maintain sub-\(\mu\)s forward \emph{and} backward passes even for models exceeding \(10^{5}\) trainable parameters.
Because KAN layers are more expressive than MLPs, they typically require less depth to achieve comparable performance; paired with sparse, locality-driven gradient updates, this enables ultrafast backpropagation at scales difficult to attain with on-chip dense MLP training.

\begin{figure}[htp]
    \centering
    \includegraphics[width=0.49\linewidth, trim = 5 20 10 10, clip]{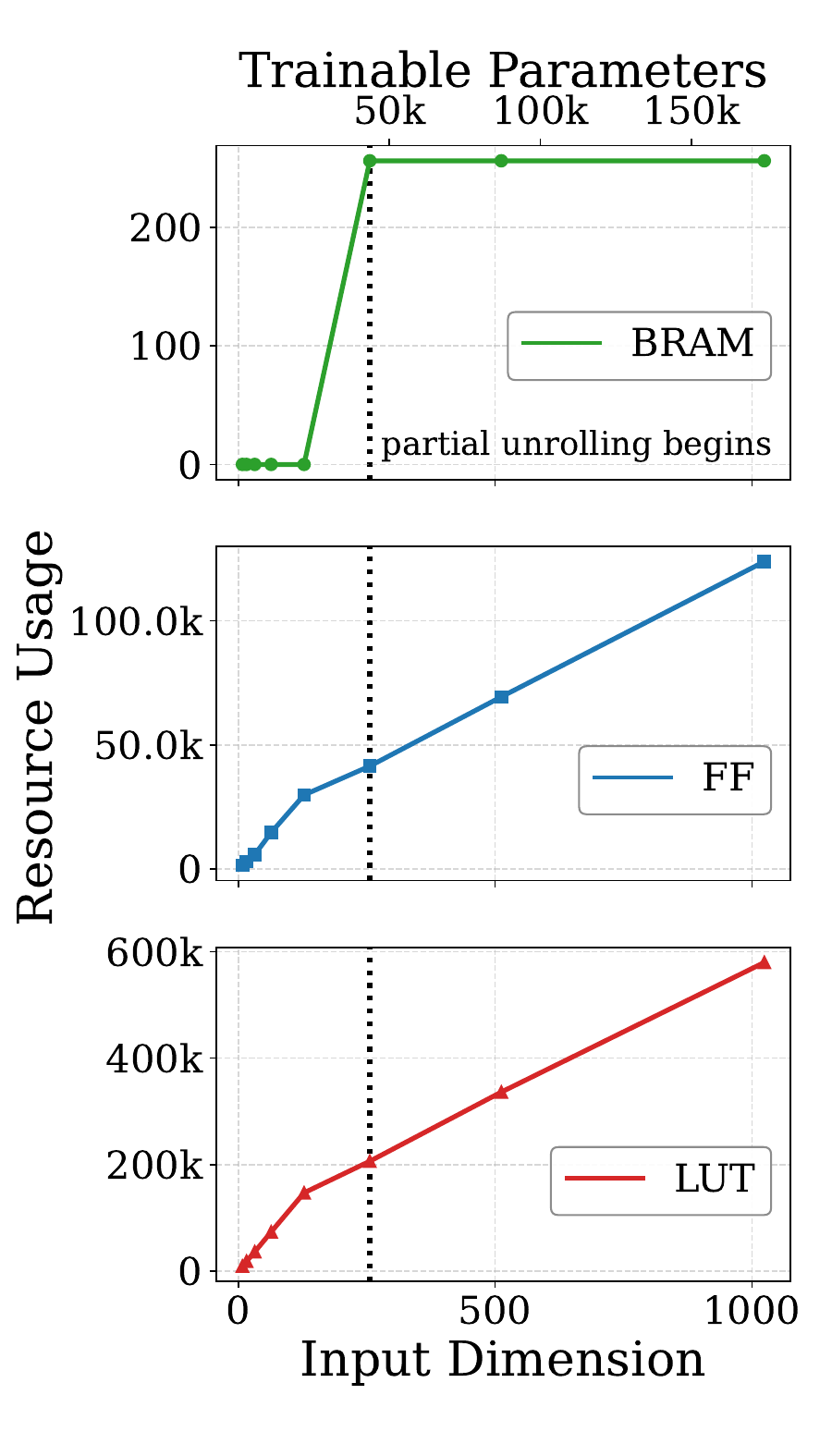}
    \includegraphics[width=0.49\linewidth, trim = 5 20 10 10, clip]{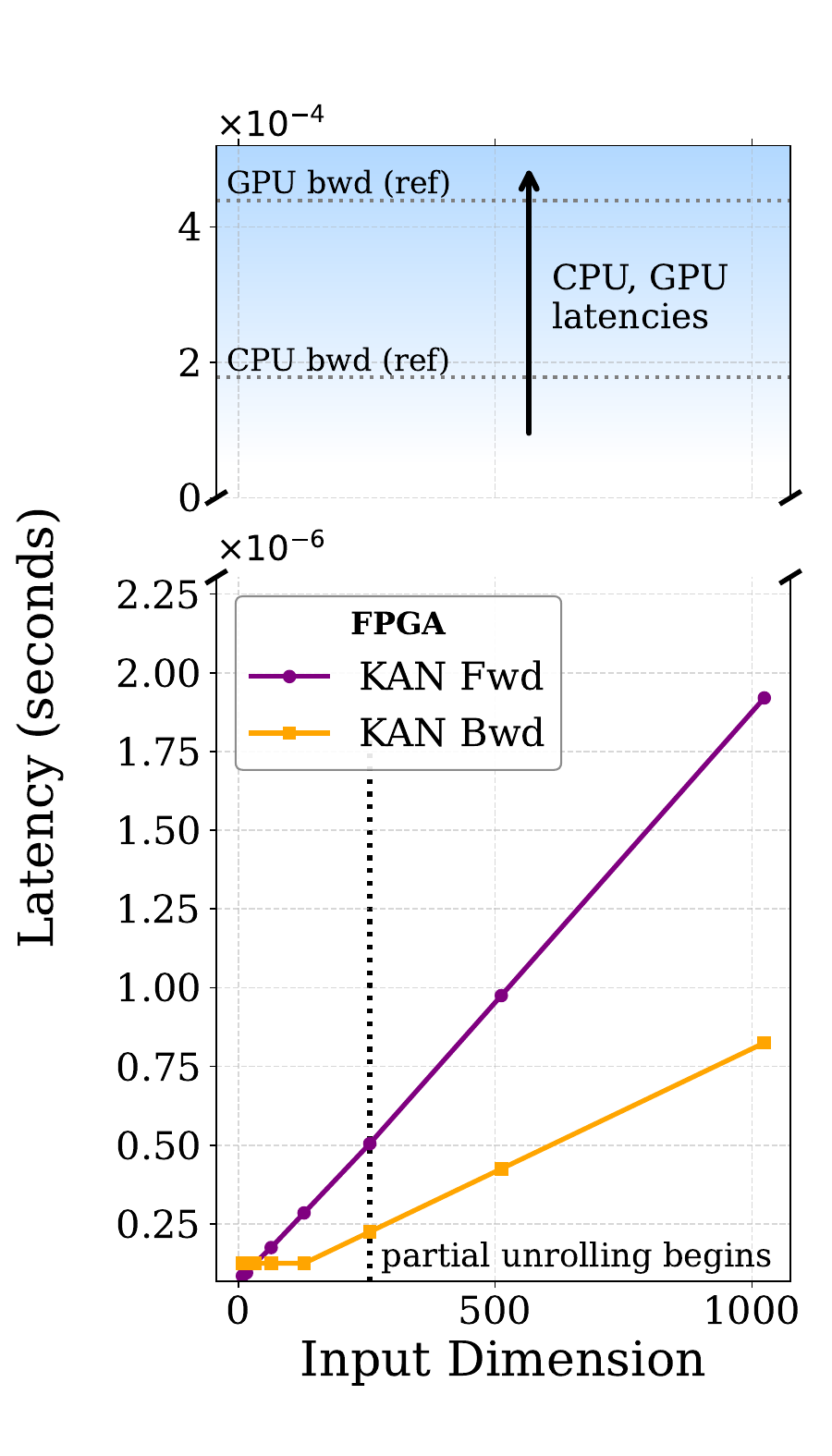}
    \caption{\textbf{Resource and latency scaling estimates with increasing input dimension.} Resource usage and latency scales roughly linearly with the input dimension and thus the trainable parameters.
    We partially unroll (reduce parallelism) for input dimensions greater than 256 to accommodate larger networks on the FPGA.
    Exact latency and resource values depend on specific implementation techniques, which can be tuned for the desired application.
}
    
\label{fig:scaling_ablation}
\end{figure}

\section{Limitations}
We use fully online, single batch SGD update to meet sub-\(\mu\)s streaming constraints; this differs from common RL practice (e.g. PPO), which leverages trajectory batches and can converge faster.
While our Acrobot setting is intentionally stringent (per-episode randomization and stochastic exploration), our kernels could be extended to accumulate gradients over short windows, trading latency and resources for stability while remaining streaming-friendly. More work also needs to be done in understanding how other optimizers (e.g. Adam) work under fixed-point learning dynamics for KAN and MLP architectures.

For scalability, the reported ablations assume a fixed set of hyperparameters; in other regimes, architectural choices may affect resource usage and latency, although we expect linear scaling trends to persist.
For example, higher task complexity may necessitate increasing the spline order, which was previously held constant, to maintain sufficient representational capacity.

Additionally, while we use HLS to facilitate rapid prototyping and broad
design-space exploration, manual RTL refinement could further improve hardware
efficiency.
To ensure that HLS estimates do not overstate our latency claims, we performed full post-route implementation for the main KAN designs (Appendix~\ref{app:postroute_validation}), all of which met timing with positive slack and remained well below \(1~\mu\)s end-to-end.
Overall, our focus remains on the architectural merits of KANs and order-of-magnitude estimates and comparisons, rather than extreme, low-level hardware optimization.

\section{Conclusion}
This work shows that Kolmogorov–Arnold Networks (KANs) are uniquely suited to ultrafast on-chip online learning, where MLPs fundamentally struggle.
By exploiting B-spline locality, KANs replace dense, interference-prone gradient updates with sparse, targeted coefficient adjustments, while scaling capacity via grid size at near-constant per-sample cost.
Additionally, KANs possess theoretical properties that make them inherently robust to fixed-point quantization.

Realized as a custom-hardware FPGA implementation, KANs enable favorable resource scaling and sub-microsecond forward and backward passes with fully on-chip parameter updates, which are three to four orders of magnitude faster than host–accelerator alternatives.
Across experiments with drifting regression, adaptive qubit readout, and non-stationary control, KANs consistently outperform parameter-matched MLPs that either diverge under quantization or require impractical precision to stabilize.

More broadly, KANs' core hardware advantages remain underexplored because modern ML stacks, and GPUs in particular, are built around dense, batched linear algebra rather than localized basis-function learning. Hardware that directly supports spline evaluation and basis selection would make KAN inference and sparse gradient updates hardware-native, enabling ultrafast adaptive control for emerging applications including quantum and plasma systems, networking, and more.%particle-detector and accelerator calibration.

\section*{Acknowledgements}
DH, AG, and PH are supported by the NSF-funded A3D3 Institute (NSF-PHY-2117997).

\section*{Impact Statement}

This paper presents work whose goal is to advance the field of Machine
Learning. There are many potential societal consequences of our work, none of
which we feel must be specifically highlighted here.

\bibliography{references}
\bibliographystyle{icml2026}

%%%%%%%%%%%%%%%%%%%%%%%%%%%%%%%%%%%%%%%%%%%%%%%%%%%%%%%%%%%%%%%%%%%%%%%%%%%%%%%
%%%%%%%%%%%%%%%%%%%%%%%%%%%%%%%%%%%%%%%%%%%%%%%%%%%%%%%%%%%%%%%%%%%%%%%%%%%%%%%
% APPENDIX
%%%%%%%%%%%%%%%%%%%%%%%%%%%%%%%%%%%%%%%%%%%%%%%%%%%%%%%%%%%%%%%%%%%%%%%%%%%%%%%
%%%%%%%%%%%%%%%%%%%%%%%%%%%%%%%%%%%%%%%%%%%%%%%%%%%%%%%%%%%%%%%%%%%%%%%%%%%%%%%
\newpage
\appendix
\onecolumn
\section{Proofs of Propositions}
\label{app:theory_proof}

\subsection{Proof of Update Complexity (Proposition~\ref{prop:complexity})}
\label{app:update_complexity}

\textbf{Setup:}
Let the total parameter budget be $N$.
We compare a standard MLP layer with input dimension $d_\text{in}^\text{MLP}$ and output dimension $d_\text{out}^\text{MLP}$ against a KAN layer with input and output dimensions $d_\text{in}^\text{KAN}$ and $d_\text{out}^\text{KAN}$. 
Let $S$ denote the B-spline order and $G$ denote the grid size.
For the MLP layer, the total number of parameters is $N_\text{MLP} = d_\text{in}^\text{MLP}d_\text{out}^\text{MLP}$ while for the KAN layer it is $N_{\text{KAN}} = d_\text{in}^\text{KAN} d_\text{out}^\text{KAN} (G + S)$. To maintain the fixed budget $N = N_\text{MLP} = N_\text{KAN}$, we assume $d_\text{in}^\text{MLP} d_\text{out}^\text{MLP} = N$ and $d_\text{in}^\text{KAN} d_\text{out}^\text{KAN} = N/(G+S)$.

\begin{proof}
We analyze the scalar arithmetic operations required to compute the gradient with respect to the weights for a single sample $(x, y)$. Let $\vec g\in \mathbb{R}^{d_\text{out}}$ be the backpropagated error signal (gradient of loss w.r.t layer output).

\paragraph{Case 1: MLP Update Complexity.}
The gradient update for an MLP weight matrix $W \in \mathbb{R}^{d_\text{out} \times d_\text{in}}$ is given by the outer product:
\[
\nabla_W \mathcal{L} = \vec gx^\top
\]
For each element $W_{q,p}$, the computation requires 1 multiplication.
The total number of update operations is:
\[
\mathcal{C}_{\text{MLP}} = d_\text{in}^\text{MLP} d_\text{out}^\text{MLP} = N
\]

\paragraph{Case 2: KAN Update Complexity.}
The KAN layer output is $y_q = \sum_{p=1}^{d_\text{in}} \sum_{i=1}^{G+S} w_{q,p,i} B_i(x_p)$.
The gradient with respect to a specific spline coefficient $w_{q,p,i}$ is:
\[
\frac{\partial \mathcal{L}}{\partial w_{q,p,i}} = \frac{\partial \mathcal{L}}{\partial y_q} \frac{\partial y_q}{\partial w_{q,p,i}} = g_q B_i(x_p)
\]
Due to the local support property of B-splines (Lemma~\ref{lemma:locality}), for a given input coordinate $x_p$, $B_i(x_p)$ is nonzero for exactly $S+1$ indices.
Let $\mathcal{K}_p = \{i \mid B_i(x_p) \neq 0\}$ be the set of active indices for input $p$, where $|\mathcal{K}_p| = S+1$.
The gradient is zero for all $i \notin \mathcal{K}_p$. Therefore, we only perform updates for indices in $\mathcal{K}_p$.

The total number of operations is the sum of updates over all edges $(p,q)$:
\[
\mathcal{C}_{\text{KAN}} = \sum_{q=1}^{d_\text{out}} \sum_{p=1}^{d_\text{in}} |\mathcal{K}_p| = d_\text{out} d_\text{in} (S+1)
\]
Substituting the parameter count relationship $d_\text{in}^\text{KAN} d_\text{out}^\text{KAN} = N/(G+S)$:
\[
\mathcal{C}_{\text{KAN}} = \frac{N}{G+S} \cdot (S+1) = N \left( \frac{S+1}{G+S} \right)
\]

\paragraph{Conclusion.}
Comparing the two complexities:
\[
\frac{\mathcal{C}_{\text{KAN}}}{\mathcal{C}_{\text{MLP}}} = \frac{N((S+1)/(G+S))}{N} = \frac{S+1}{G+S}
\]
Since $S$ is fixed and $G$ scales with grid resolution (typically $G \gg S$), the KAN update complexity is strictly sublinear relative to the parameter budget.
\end{proof}

\subsection{Proof of KAN Activation Bounds (Proposition~\ref{prop:scale})}
\label{app:spline_output_bounds}

\begin{proof}
Let $m = \min_i W_i$ and $M = \max_i W_i$.  
For normalized B-splines, we have for any $x$ in the domain:
\[
B_i(x) \ge 0 \quad\text{and}\quad \sum_i B_i(x) = 1 .
\]

Consider
\[
\phi(x) = \sum_i W_i B_i(x) .
\]
Using the partition of unity,
\[
\phi(x) - m = \sum_i W_i B_i(x) - m \sum_i B_i(x)
           = \sum_i (W_i - m) B_i(x) \ge 0,
\]
since $W_i - m \ge 0$ and $B_i(x) \ge 0$.  
Thus $\phi(x) \ge m$.

Similarly,
\[
M - \phi(x) = \sum_i M B_i(x) - \sum_i W_i B_i(x)
           = \sum_i (M - W_i) B_i(x) \ge 0,
\]
so $\phi(x) \le M$.

Hence,
\[
\min_i W_i \le \phi(x) \le \max_i W_i .
\]
\end{proof}

\subsection{Proof of Bounded Gradient Sensitivity (Proposition~\ref{prop:robustness})}
\label{app:bounded_gradient_sensitivity}

\begin{proof}
We examine the gradient of the loss $\mathcal{L}$ with respect to parameters in a single layer $\ell$, considering a connection from input node $p$ to output node $q$. Let $g_{\ell+1,q} = \frac{\partial \mathcal{L}}{\partial x_{\ell+1,q}}$ denote the backpropagated gradient.

\textbf{Case 1: MLP.}
The pre-activation output is
\[
x_{\ell+1,q} = \sum_{p=1}^{d_{\mathrm{in}}} W_{\ell,q,p} \, x_{\ell,p}.
\]
By the chain rule, the gradient with respect to $W_{\ell,q,p}$ is
\[
\frac{\partial \mathcal{L}}{\partial W_{\ell,q,p}} 
= g_{\ell+1,q} \cdot \frac{\partial x_{\ell+1,q}}{\partial W_{\ell,q,p}} 
= g_{\ell+1,q} \cdot x_{\ell,p}.
\]

Critically, the gradient magnitude scales with $|x_{\ell,p}|$, coupling the update dynamic range to the input distribution.

\textbf{Case 2: KAN.}
The layer output is
\[
x_{\ell+1,q} = \sum_{p=1}^{n_\ell} \phi_{\ell,q,p}(x_{\ell,p}) 
= \sum_{p=1}^{n_\ell} \sum_{i=1}^{G+S} w_{\ell,q,p,i} \, B_i(x_{\ell,p}).
\]
The gradient with respect to coefficient $w_{\ell,q,p,i}$ is
\[
\frac{\partial \mathcal{L}}{\partial w_{\ell,q,p,i}} 
= g_{\ell+1,q} \cdot \frac{\partial x_{\ell+1,q}}{\partial w_{\ell,q,p,i}} 
= g_{\ell+1,q} \cdot B_i(x_{\ell,p}).
\]
By the partition of unity property of B-splines, $\sum_i B_i(x) = 1$ and $0 \le B_i(x) \le 1$ for all $x$ and $i$. Therefore,
\[
\left| \frac{\partial \mathcal{L}}{\partial w_{\ell,q,p,i}} \right| 
= |g_{\ell+1,q}| \cdot |B_i(x_{\ell,p})| 
\le |g_{\ell+1,q}|.
\]
Unlike the MLP case, the gradient dynamic range is completely decoupled from the input magnitude $|x_{\ell,p}|$ as $B_i(x) \in [0, 1]$.

\end{proof}

\section{FPGA Implementation Details}
\label{app:fpga_impl_details}

\subsection{Code Generation and Build Flow}
\label{app:codegen}
We generate one self-contained Vitis HLS project per model configuration. A lightweight Python frontend writes a set of C++ headers/sources from templates:
(i) \texttt{defines.h} (dimensions, precisions, grid constants, learning rate),
(ii) \texttt{parameters.h} (parameter and context structs, plus parameter initialization),
(iii) \texttt{components.h} (layer primitives: forward/backward/update),
and (iv) a top-level kernel (\texttt{kan.cpp} / \texttt{mlp.cpp}) that invokes all layers in sequence.
A \texttt{build.tcl} script specifies the target FPGA part and clock period for Vitis HLS synthesis.

\subsection{Fixed-Point Formats and Saturation}
\label{app:fixedpoint}
All arithmetic uses either \texttt{ap\_fixed}$\langle W,I\rangle$ with rounding and saturation (AP\_RND\_CONV, AP\_SAT) or IEEE float (for ablations).
The float datatype is used only to study convergence behavior, not for hardware analysis.
We define three independent types:
\texttt{input\_t} for inputs, \texttt{weight\_t} for parameters/accumulators, and \texttt{output\_t} for outputs/feedback.
The learning rate is compiled as a constant of type \texttt{weight\_t}.

\subsection{KAN Layer Implementation}
\label{app:kan_layer}

\paragraph{Uniform grid and indexing.}
Let the grid cover $[x_{\min},x_{\max}]$ with $G$ cells of width $H=(x_{\max}-x_{\min})/G$.
For each coordinate $x$, we compute
$t=(x-x_{\min})/H$, clamp $t$ to $[0,G)$, and set
$k=\lfloor t\rfloor$ (cell index).
Within the cell, we use a small fractional index
$u$ obtained from the low bits of a fixed-point representation of $t$, producing \texttt{LUT\_RESOLUTION}$=2^F$ bins per cell.
We store $(k,u)$ for each $(q,p)$ pair in a per-layer context to make the backward pass purely index-driven.

\paragraph{Basis LUTs (values and derivatives).}
For spline order $S$, exactly $S+1$ basis functions are nonzero per cell.
We precompute a constant LUT:
\[
\texttt{LUT.B}_r[u]\approx B_r(\xi_u),\qquad
\texttt{LUT.dB}_r[u]\approx \frac{d}{dx}B_r(\xi_u),
\]
for $r=0,\ldots,S$ and $u=0,\ldots,2^F-1$, where $\xi_u$ are uniformly spaced sample points over a unit cell.
$F$ is a hyperparameter for how many bits are used to address $\texttt{LUT.B}_r[u]$ and is specified in the user-defined configuration
The LUT is compiled into \texttt{parameters.h} as a read-only constant and bound to on-chip ROM (LUTRAM).

\paragraph{Forward pass (KAN).}
Each layer stores parameters as
$\texttt{Ws}[q][p][c]$ with coefficient index $c\in\{0,\ldots,(G{+}S{-}1)\}$.
Given $(k,u)$ for input $x_p$, we evaluate only the active coefficients:
\[
\phi_{q,p}(x_p)\approx \sum_{r=0}^{S}\texttt{Ws}[q][p][k+r]\cdot \texttt{LUT.B}_r[u],
\qquad
y_q=\sum_{p}\phi_{q,p}(x_p).
\]
In hardware, the inner sums are fully unrolled across $r$ and typically across $p$ (subject to resource constraints), with a final unrolled reduction over $p$.

\paragraph{Backward pass and in-place update (KAN).}
For the upstream gradient $g_q=dL/dy_q$, the coefficient update for the active cell is
\[
\texttt{Ws}[q][p][k+r]\;\leftarrow\;\texttt{Ws}[q][p][k+r]\;-\;\eta\, g_q\, \texttt{LUT.B}_r[u],
\quad r=0,\ldots,S.
\]
The downstream gradient for each input coordinate is computed via LUT derivatives:
\[
\frac{dL}{dx_p}=\sum_{q} g_q \sum_{r=0}^{S} \texttt{Ws}[q][p][k+r]\cdot \texttt{LUT.dB}_r[u].
\]
In the implementation, $(k,u)$ are read from context, the $r$-loop is unrolled, and $dL/dx_p$ is accumulated across $q$.

\subsection{MLP Baseline Implementation}
\label{app:mlp_impl}

\paragraph{Forward.}
Each layer computes $z_q=\sum_p W[q][p]x_p+b[q]$ and applies a hardware-friendly activation.
In our experiments, we use ReLU; we also support piecewise \texttt{hard\_tanh} and \texttt{hard\_silu} variants. For backprop, we store in context: (i) a copy of the layer input $x$ and (ii) the pre-activation vector $z$ (except for the final linear layer).

\paragraph{Backward and update.}
Given upstream gradient $dL/dy$, we compute $dL/dz = (dL/dy)\odot f'(z)$ for hidden layers (and $dL/dz=dL/dy$ for the linear output layer), then apply in-place SGD updates:
\[
b[q]\leftarrow b[q]-\eta\, dL/dz_q,\qquad
W[q][i]\leftarrow W[q][i]-\eta\, dL/dz_q \, x_i.
\]
The downstream gradient is
\[
\frac{dL}{dx_p}=\sum_q (dL/dz_q)\, W_{\text{old}}[q][p],
\]
where we use $W_{\text{old}}$ (the weight prior to update) to match standard backprop ordering.

\paragraph{Initialization and parity with PyTorch.}
Weights and biases are initialized with the same per-layer scale as common PyTorch uniform schemes (using a scale proportional to $1/\sqrt{d_{\text{in}}}$) to ensure controlled comparisons.

\subsection{HLS Directives and On-Chip Storage}
\label{app:hls_directives}

\paragraph{Deterministic control.}
All loops have static bounds determined by compile-time dimensions. The kernels are structured to enable deterministic scheduling: unrolled inner loops for parallel MACs/coefficient updates and pipelining across output neurons where appropriate.

\paragraph{LUT storage.}
KAN basis/derivative LUTs are bound to single-port ROM implemented in LUTRAM (read-only, constant at synthesis), minimizing BRAM pressure.

\paragraph{Parameter storage.}
KAN parameters are stored as a 3D array \texttt{Ws[q][p][c]}. We apply complete partitioning across \texttt{q} and \texttt{p}, and cyclic partitioning across the coefficient dimension \texttt{c} with factor $(S{+}1)$ to match the number of active basis functions per cell. MLP weights \texttt{W[q][p]} and biases \texttt{b[q]} are partitioned to expose full parallelism across dot-products and updates. Per-layer contexts (KAN indices; MLP cached $x$ and $z$) are fully partitioned and reside in registers/small RAMs.

\subsection{Software Interface and Verification}

\paragraph{CPU-loadable functional model.}
For rapid functional testing and simulation of fixed-point arithmetic, we compile the generated HLS C++ into a CPU-loadable shared library and invoke it from Python via \texttt{ctypes}.
The exposed interface accepts pointers to the input, output, and feedback tensors, together with a \texttt{zero\_grad} flag controlling whether parameter updates are applied.
This setup enables fast iteration and direct comparison against software reference implementations without FPGA synthesis.

\paragraph{Functional verification.}
We use the shared library to validate both numerical correctness and learning dynamics.
For each architecture, we first verify convergence behavior in floating-point mode using HLS emulation.
In this setting, the HLS MLP implementation matches an equivalent PyTorch model exactly, including initialization and update ordering, providing a strong software reference.
All fixed-point experiments are then performed with identical architectures and learning rules, ensuring that any observed differences arise solely from quantization and hardware execution effects.

\subsection{HLS Synthesis}
\label{app:hls_synth}

The verified HLS designs are synthesized using Vitis HLS via an automatically generated \texttt{build.tcl} script.
Synthesis reports provide estimates of kernel latency, initiation interval, and on-chip resource utilization (LUTs, DSPs, BRAMs).
Unless otherwise stated, all designs target an AMD Virtex™ UltraScale+™ \texttt{XCVU13P} FPGA and are synthesized with a target clock period of 5~ns (200~MHz).
All comparisons between KAN and MLP implementations are performed under identical timing, precision, and synthesis constraints.

\subsection{Post-Route Implementation Validation}
\label{app:postroute_validation}

Although the main hardware results are reported from Vitis HLS synthesis to
enable broad design-space exploration across bitwidths, architectures, and
tasks, we further validate the physical realizability of the KAN latency claims
by running full implementation on the target FPGA fabric. We used Vivado~2024.1
with the Vitis HLS \texttt{export\_design -format ip\_catalog -flow impl}
workflow, targeting the AMD Virtex UltraScale+ XCVU13P device and a 5~ns
clock constraint. This flow performs logic synthesis, placement, routing, and
post-route timing analysis.

Because our KAN kernels have static loop bounds, deterministic control flow,
fully on-chip state, and no external DRAM accesses, their latency in clock
cycles is fixed by the HLS schedule. The primary risk to sub-\(\mu\)s operation
is therefore failure to meet the target clock period after routing. As shown in
Table~\ref{tab:postroute_kan_validation}, all implemented KAN designs meet the
5~ns timing constraint with positive worst negative slack (WNS), validating that
the reported forward-plus-backward latencies are physically realizable on the
target FPGA.

\begin{table}[h]
\centering
\caption{\textbf{Post-route validation of KAN latency on AMD Virtex UltraScale+ XCVU13P.}
All designs target a 5~ns clock period. Latency denotes one complete
online update, including forward pass, backward pass, and in-place parameter
update. \textit{Note: Total Latency = Forward + Backward pass.}}
\label{tab:postroute_kan_validation}
\resizebox{\linewidth}{!}{%
\begin{tabular}{lcccc}
\toprule
\textbf{Task} &
\textbf{Achieved Clock (ns)} &
\textbf{WNS (ns)} &
\textbf{Latency (ns)} &
\textbf{Post-Route LUT/FF/DSP} \\
\midrule
Adaptive Function Approximation
& 3.042 & +1.958 & 50 & 430 / 199 / 4 \\
Single-Shot Qubit Readout
& 3.871 & +1.129 & 140 & 10,592 / 3,858 / 84 \\
Non-Stationary Acrobot Control
& 3.884 & +1.116 & 70 & 7,666 / 4,385 / 106 \\
\bottomrule
\end{tabular}%
}

\vspace{0.5em}
\end{table}

%%%%%%%%%%%%%%%%%%%%%%%%%%%%%%%%%%%%%%%%%%%%%%%%%%%%%%%%%%%%%%%%%%%%%%%%%%%%%%%
%%%%%%%%%%%%%%%%%%%%%%%%%%%%%%%%%%%%%%%%%%%%%%%%%%%%%%%%%%%%%%%%%%%%%%%%%%%%%%%

\section{Benchmark 2 Details: Adaptive Single-shot Qubit Readout}
\label{apx:single_qubit_readout_details}

We instantiate adaptive single-shot readout as a \emph{fully online} binary classification problem on streaming IQ samples.
At each time step $t$, the learner receives a single point $\mathbf{x}_t=(I_t,Q_t)\in\mathbb{R}^2$, predicts $\hat{y}_t\in\{-1,+1\}$, and immediately receives the true label $y_t\in\{-1,+1\}$ and updates once.
The data distribution drifts continuously in time and is intentionally non-linearly separable due to a Kerr-type phase distortion.

\begin{algorithm}[b]
\caption{Rotating XOR stream with Kerr distortion}
\label{alg:rotating_xor}
\begin{algorithmic}[1]
\STATE Sample latent index $s \sim \mathrm{Unif}\{0,1,2,3\}$ and set label $y\in\{-1,+1\}$ by parity.
\STATE Set center $\boldsymbol\mu_s \in \mathbb{R}^2$ from the quadrant list.
\STATE Sample $(I,Q)\sim \mathcal{N}(\boldsymbol\mu_s,\sigma^2\mathbf{I}_2)$.
\STATE Compute $r=\sqrt{I^2+Q^2}$, $\phi=\mathrm{atan2}(Q,I)$, and apply Kerr twist $\phi\leftarrow \phi+\kappa r^2$.
\STATE Set $(I,Q)\leftarrow (r\cos\phi,\ r\sin\phi)$.
\STATE Apply breathing $(I,Q)\leftarrow p_t (I,Q)$ with $p_t=1+a\sin(\omega t)$.
\STATE Rotate by $\theta_t=\mathrm{rad}(t\cdot \texttt{drift\_speed})$: $(I,Q)\leftarrow R(\theta_t)(I,Q)$.
\OUTPUT $\mathbf{x}_t=(I,Q)$ and $y_t=y$.
\end{algorithmic}
\end{algorithm}

\subsection{Rotating XOR constellation with Kerr distortion and drift}
\label{apx:rotating_xor_stream}

\paragraph{Latent state and class definition.}
Each sample is generated by first drawing a latent ``computational-basis'' index
\[
s_t \sim \mathrm{Unif}\{0,1,2,3\},
\]
corresponding to one of four Gaussian blobs centered in the four quadrants of the IQ plane.
We use an XOR/parity labeling:
\[
y_t =
\begin{cases}
-1, & s_t \in \{0,1\} \quad (\text{quadrants }1\ \&\ 3)\\
+1, & s_t \in \{2,3\} \quad (\text{quadrants }2\ \&\ 4).
\end{cases}
\]
This creates the classic 2D XOR decision boundary (non-linearly separable).

\paragraph{Base blob centers and noise.}
Let $\texttt{spread}>0$ set the separation between blob centers and $\sigma>0$ set isotropic Gaussian noise.
We define the unrotated centers
\[
\begin{aligned}
\boldsymbol\mu_{0}&=(+\texttt{spread},+\texttt{spread}),\quad
\boldsymbol\mu_{1}=(-\texttt{spread},-\texttt{spread}),\\
\boldsymbol\mu_{2}&=(-\texttt{spread},+\texttt{spread}),\quad
\boldsymbol\mu_{3}=(+\texttt{spread},-\texttt{spread}).
\end{aligned}
\]
then sample
\[
(I,Q) \sim \mathcal{N}(\boldsymbol\mu_{s_t}, \sigma^2 \mathbf{I}_2).
\]

\paragraph{Kerr-type phase distortion.}
To emulate non-linear IQ warping (e.g. due to Kerr physics), we apply a phase twist that grows with amplitude.
Let $r=\sqrt{I^2+Q^2}$ and $\phi=\mathrm{atan2}(Q,I)$. We compute
\[
\phi'=\phi + \kappa r^2,
\qquad
(I,Q) \leftarrow (r\cos\phi',\ r\sin\phi'),
\]
where $\kappa$ controls distortion strength.
This transforms circular blobs into ``comet-shaped'' clusters and makes simple linear or distance-based boundaries suboptimal.

\paragraph{Slow drift: global rotation and breathing.}
We impose two forms of non-stationarity.

\emph{(i) Global rotation.}
At time $t$, we rotate the entire constellation by angle
\[
\theta_t = \mathrm{rad}\!\left(t\cdot \texttt{drift\_speed}\right),
\]
where $\mathrm{rad}(\cdot)$ converts degrees to radians (matching the implementation).
We then apply
\[
\begin{bmatrix} I \\ Q \end{bmatrix}
\leftarrow
\begin{bmatrix}
\cos\theta_t & -\sin\theta_t \\
\sin\theta_t & \cos\theta_t
\end{bmatrix}
\begin{bmatrix} I \\ Q \end{bmatrix}.
\]

\emph{(ii) Breathing/pulsing.}
We optionally modulate the radius by a slow sinusoid
\[
(I,Q)\leftarrow p_t (I,Q), \qquad
p_t = 1 + a \sin(\omega t),
\]
which induces gradual expansion/contraction and further stresses online adaptation.

\subsection{Reference implementation and hyperparameters}
\label{apx:rotating_xor_hparams}

Algorithm~\ref{alg:rotating_xor} summarizes the generator used in our experiments.
Unless stated otherwise, we use the default parameters in Table~\ref{tab:rotating_xor_params}, which match the provided code.

\begin{table}[t]
\centering
\caption{Default stream parameters for Benchmark 2 (matching the code).}
\label{tab:rotating_xor_params}
\begin{tabular}{lcl}
\toprule
Parameter & Symbol & Value \\
\midrule
Blob separation & \texttt{spread} & $1.5$ \\
Gaussian noise std. & $\sigma$ (\texttt{noise\_scale}) & $0.4$ \\
Kerr distortion strength & $\kappa$ (\texttt{kerr\_strength}) & $0.4$ \\
Rotation rate (deg/step) & \texttt{drift\_speed} & $0.05$ \\
Breathing amplitude & $a$ & $0.2$ \\
Breathing frequency & $\omega$ & $0.01$ \\
\bottomrule
\end{tabular}
\end{table}

\subsection{Online protocol and reported metric}
\label{apx:rotating_xor_protocol}

At each step $t$, the learner:
(i) observes $\mathbf{x}_t$,
(ii) outputs $\hat{y}_t\in\{-1,+1\}$,
(iii) receives $y_t$ immediately, and
(iv) performs a single gradient update using only the current sample (no replay/buffering).
We report \emph{instantaneous accuracy} $\mathbf{1}\!\left\{\hat{y}_t = y_t\right\}$ and its running mean over time (the plotted ``running accuracy'').
When visualizing decision boundaries (Fig.~\ref{fig:qubit_readout_main}), we freeze model parameters at the indicated time and evaluate $\hat{y}$ on a dense grid in the $(I,Q)$ plane.

\paragraph{Reproducibility.}
All results are generated using the same streaming procedure described above, with fixed seeds per run.
When sweeping capacity (KAN grid size $G$ or MLP parameter count $N$), we keep the stream parameters fixed (Table~\ref{tab:rotating_xor_params}) and vary only model/hardware configurations.

%%%%%%%%%%%%%%%%%%%%%%%%%%%%%%%%%%%%%%%%%%%%%%%%%%%%%%%%%%%%%%%%%%%%%%%%%%%%%%%
%%%%%%%%%%%%%%%%%%%%%%%%%%%%%%%%%%%%%%%%%%%%%%%%%%%%%%%%%%%%%%%%%%%%%%%%%%%%%%%

\section{Extended MLP Baselines: Initialization and Normalization}
\label{app:extended_mlp_baselines}

To demonstrate KAN's advantage against a stronger Multi-Layer Perceptron (MLP) baseline and to quantify the hardware trade-offs of common software stabilization techniques, we augmented our MLP baselines in two ways:

\begin{itemize}
    \item \textbf{MLP + LSUV Init}: We apply the data-dependent initialization scheme of~\cite{goodinit} to the MLP weights before online learning begins. This gives the MLP a calibrated starting point without adding any runtime overhead during inference or training.
    \item \textbf{MLP + LayerNorm}: We added a LayerNorm~\cite{layer_norm} layer after each hidden layer. We used High-Level Synthesis (HLS) to implement this entirely in fixed-point arithmetic with trainable affine parameters ($\gamma$, $\beta$). The forward pass computes per-layer statistics to normalize activations, and the backward pass uses the closed-form gradient for parameter updates. We verified that our implementation produces identical outputs to PyTorch's \texttt{nn.LayerNorm} under float32 arithmetic.
    \item \textbf{MLP + LN + LSUV}: The combination of both initialization and normalization techniques.
\end{itemize}

We evaluated all variants on the Adaptive Function Approximation and Single-Shot Qubit Readout benchmarks. The online learning performance metrics are reported in Table~\ref{tab:extended_mlp_perf}, and the corresponding FPGA resource utilization and on-chip latency are detailed in Table~\ref{tab:extended_mlp_hardware}. 

First, LSUV provides a modest benefit for vanilla MLPs, confirming that initialization matters. However, the KAN still outperforms the best MLP variant by a wide margin. Furthermore, integrating LayerNorm introduces severe hardware penalties. In fixed-point logic, operations such as square roots and reciprocals required for normalization lead to an excessive utilization of LUTs and DSPs, alongside increased latency. Empirically, we also find that under strict fixed-point constraints, LayerNorm degrades online learning stability and results in reduced performance.

Ultimately, these evaluations demonstrate that KANs, as compared to stronger MLP baselines, still offer superior accuracy at a fraction of the hardware cost while remaining well within the ultra-low latency regime.

\begin{table}[h]
\centering
\caption{\textbf{Online learning performance with extended MLP Baselines.} All variants were evaluated under the same fixed-point precision constraints as the main text.}
\label{tab:extended_mlp_perf}
\begin{tabular}{llcccc}
\toprule
\textbf{Benchmark} & \textbf{Model} & \textbf{Dimension} & \textbf{\#Params} & \textbf{Bitwidth $\langle W,I\rangle$} & \textbf{Metric} \\
\midrule
\multirow{9}{*}{\parbox{2.8cm}{\raggedright Adaptive Func.\\Approx.}}
& KAN ($G=10$) & [1,1] & 13 & $\langle6,2\rangle$ & \textbf{13.2} (Regret $\downarrow$) \\
& MLP-P & [1,2,2,1] & 13 & $\langle6,2\rangle$ & 97.6 \\
& MLP-P + LSUV & [1,2,2,1] & 13 & $\langle6,2\rangle$ & 94.9 \\
& MLP-P + LayerNorm & [1,2,2,1] & 21 & $\langle6,2\rangle$ & 112.2 \\
& MLP-P + LN + LSUV & [1,2,2,1] & 21 & $\langle6,2\rangle$ & 127.1 \\
& MLP-L & [1,16,16,1] & 321 & $\langle6,2\rangle$ & 48.3 \\
& MLP-L + LSUV & [1,16,16,1] & 321 & $\langle6,2\rangle$ & 39.5 \\
& MLP-L + LayerNorm & [1,16,16,1] & 385 & $\langle6,2\rangle$ & 145.7 \\
& MLP-L + LN + LSUV & [1,16,16,1] & 385 & $\langle6,2\rangle$ & 81.8 \\
\midrule
\multirow{9}{*}{\parbox{2.8cm}{\raggedright Single-Shot\\Qubit Readout}}
& KAN ($G=10$) & [2,7,1] & 273 & $\langle7,3\rangle$ & \textbf{92.8\%} (Acc $\uparrow$) \\
& MLP-P & [2,20,8,5,1] & 279 & $\langle10,3\rangle$ & 69.8\% \\
& MLP-P + LSUV & [2,20,8,5,1] & 279 & $\langle10,3\rangle$ & 70.4\% \\
& MLP-P + LayerNorm & [2,20,8,5,1] & 345 & $\langle10,3\rangle$ & 60.2\% \\
& MLP-P + LN + LSUV & [2,20,8,5,1] & 345 & $\langle10,3\rangle$ & 51.0\% \\
& MLP-L & [2,16,16,16,1] & 609 & $\langle10,3\rangle$ & 62.4\% \\
& MLP-L + LSUV & [2,16,16,16,1] & 609 & $\langle10,3\rangle$ & 70.8\% \\
& MLP-L + LayerNorm & [2,16,16,16,1] & 705 & $\langle10,3\rangle$ & 61.6\% \\
& MLP-L + LN + LSUV & [2,16,16,16,1] & 705 & $\langle10,3\rangle$ & 60.2\% \\
\bottomrule
\end{tabular}
\end{table}

\begin{table}[h]
\centering
\caption{\textbf{FPGA resource \& latency overhead from LayerNorm at 200 MHz clock frequency.} Total Latency includes both the Forward and Backward passes. LSUV initialization is performed offline prior to deployment and therefore introduces no runtime hardware overhead.}
\label{tab:extended_mlp_hardware}
\begin{tabular}{llcccc}
\toprule
\textbf{Benchmark} & \textbf{Model} & \textbf{LUTs} & \textbf{FFs} & \textbf{DSPs} & \textbf{Total Latency (ns)} \\
\midrule
\multirow{5}{*}{\parbox{2.8cm}{\raggedright Adaptive Func.\\Approx.}}
& KAN & \textbf{1,965} & \textbf{236} & \textbf{3} & \textbf{50} \\
& MLP-P & 3,739 & 421 & 8 & 70 \\
& MLP-P + LayerNorm & 10,107 & 2,537 & 16 & 285 \\
& MLP-L & 119,282 & 15,450 & 288 & 215 \\
& MLP-L + LayerNorm & 148,291 & 32,858 & 324 & 530 \\
\midrule
\multirow{5}{*}{\parbox{2.8cm}{\raggedright Single-Shot\\Qubit Readout}}
& KAN & \textbf{45,673} & \textbf{6,196} & \textbf{63} & \textbf{140} \\
& MLP-P & 141,092 & 19,652 & 245 & 225 \\
& MLP-P + LayerNorm & 211,083 & 46,838 & 319 & 1,125 \\
& MLP-L & 180,975 & 92,308 & 880 & 370 \\
& MLP-L + LayerNorm & 428,832 & 83,245 & 663 & 1,620 \\
\bottomrule
\end{tabular}
\end{table}

\end{document}